\documentclass[conference]{IEEEtran}

\makeatletter
\def\ps@headings{%
\def\@oddhead{\mbox{}\scriptsize\rightmark \hfil \thepage}%
\def\@evenhead{\scriptsize\thepage \hfil \leftmark\mbox{}}%
\def\@oddfoot{}%
\def\@evenfoot{}}
\makeatother

\pagestyle{headings}

\newcommand{\nop}[1]{}
\usepackage{stmaryrd}
\usepackage{bbding}
\usepackage{algorithmic}
\usepackage{algorithm}
\usepackage{graphicx}
\usepackage{epsfig}
\usepackage{subfigure}
\usepackage{multirow}
\usepackage{balance}

\usepackage[centertags]{amsmath}

\newtheorem{definition}{Definition}
\newtheorem{remark}{Remark}

\newtheorem{lemma}{Lemma}
\newtheorem{theorem}{Theorem}

\newtheorem{proof}{Proof}

\input epsf

\hyphenation{op-tical net-works semi-conduc-tor}

\begin{document}

\title{A Stochastic Calculus for Network Systems with Renewable Energy Sources}

\author{\IEEEauthorblockN{Kui Wu}
\IEEEauthorblockA{Dept. of Computer Science\\
University of Victoria\\
Victoria, British Columbia, Canada}
\and
\IEEEauthorblockN{Yuming Jiang}
\IEEEauthorblockA{Q2S Center of Excellence \\
Norwegian University of Science and Technology\\
Trondheim, Norway
}
\and
\IEEEauthorblockN{Dimitri Marinakis}
\IEEEauthorblockA{Dept. of Computer Science\\
University of Victoria\\
Victoria, British Columbia, Canada}
}


\maketitle

\begin{abstract}
We consider the performance modeling and evaluation of network systems powered with renewable energy sources such as solar and wind energy. Such energy sources largely depend on environmental conditions, which are hard to predict accurately. As such, it may only make sense to require the network systems to support a soft quality of service (QoS) guarantee, i.e., to guarantee a service requirement with a certain high probability. In this paper,  we intend to build a solid mathematical foundation to help better understand the stochastic energy constraint and the inherent correlation between QoS and the uncertain energy supply. We utilize a calculus approach to model the cumulative amount of charged energy and the cumulative amount of consumed energy. We derive upper and lower bounds on the remaining energy level based on a stochastic energy charging rate and a stochastic energy discharging rate. By building the bridge between energy consumption and task execution (i.e., service), we study the QoS guarantee under the constraint of uncertain energy sources. We further show how performance bounds can be improved if some strong assumptions can be made.  
\end{abstract}



\begin{IEEEkeywords} Stochastic Network Calculus, Performance Evaluation, Renewable Energy, Energy Scheduling
\end{IEEEkeywords}

\section{Introduction}\label{sec:introduction}

In the last decades, there have been increasing demands on computing, communication and storage systems. The strong demands drive modern IT infrastructures to extend and scale at an unprecedented speed, raising serious ``green'' related concerns about high energy consumption and greenhouse emission. Addressing this problem, the use of renewable energy sources, such as solar and wind energy, plays an important role in the support of sustainable computing~\cite{green}. Various energy harvesting devices have been developed and broadly used in real-world applications. For example, to support perpetual environmental monitoring, solar energy has been used to power tiny sensor nodes deployed in the wilderness~\cite{Chou09}.  


While the benefit of using renewable energy is clear, its application can pose great challenges in terms of assuring a satisfactory quality of services (QoS) for computing systems. First, renewable energy sources are generally unreliable and hard to predict. The uncertainty in the energy supply makes it extremely hard to support strict QoS requirements. As a result, it only makes sense to demand the system to support a soft QoS guarantee, i.e., to guarantee the QoS requirement (e.g., delay, throughput) with a certain probability. Second, although modern computing and communication devices could be equipped with rate-adaptive capabilities~\cite{Zafer09}, allowing the devices to ``smartly'' schedule the execution of various tasks, it is nevertheless non-trivial to design good scheduling algorithms that can effectively utilize a limited and variable energy resource. Many aspects come into play, including for example, the predicted energy charging rate, the tolerable range of a QoS guarantee, the rate-adaptive features of the device, and so on. Third, QoS support and task scheduling become even harder when the network includes multiple nodes powered by renewable energy, because the task scheduling and energy management of one node may have direct impact on the other nodes. Such dependency makes QoS support extremely hard.

Given these challenges, we perceive the strong need for a generic analytical framework for performance modeling and evaluation of a network system using renewable energy sources. Such a theoretical model should (1) capture the stochastic features in energy replenishment and energy consumption, (2) provide good guidelines on the schedulability of given tasks under uncertain energy constraints, (3) be applicable to a large group of systems where the power-rate function~\cite{kansal2007power,Zafer09} may be of different forms, and (4) be able to analyze a single node as well as a network system.           

Various methods have been developed for task scheduling and performance analysis of energy constrained system, including for example Markov chain based methods~\cite{niyato2007sleep,susu2008stochastic}, calculus approaches~\cite{kansal2007power,Zafer09}, prediction-based approaches~\cite{Gorlatova11,vigorito2007adaptive}, and so on.  Nevertheless, no analytical model so far is sufficient to meet all the above requirements. We are thus motivated to develop a more generic analytical framework to fill the vacancy. In this paper, we make the following contributions: 

\begin{itemize}
\item We build a theoretical model for performance evaluation of network systems with renewable energy sources. Our model is based on recent progress in stochastic network calculus, but it much extends the concepts of the traditional stochastic network calculus by introducing energy charging/discharging models. Such extension is non-trivial since the new concepts require special treatment in the derivation of performance bounds. 
\item We derive the stochastic upper and lower bounds on a node's residual energy level. These bounds provide fundamental guidance in the energy management of a system with renewable energy supply. 
\item By using a generic power-rate function to bridge the task execution and its energy consumption, we derive the stochastic performance bounds on delay and system backlog. We study both the single-node case and the network case.  
\item We extend the model to analyze systems that utilize multiple energy sources. 
\item We point out a method to improve the performance bounds if some strong assumptions, such as independence between multiple energy sources, can be made. 
\end{itemize}

The rest of the paper is organized as follows. We introduce the basic notations in Section~\ref{sec:background}. In Section~\ref{sec:energyModel}, we present new energy models to capture the stochastic features in the energy charging and discharging processes. Based on the new energy models, we derive the stochastic bounds on a node's remaining energy. We analyze the performance bounds of a single node with respect to delay and system backlog in Section~\ref{sec:singleNode}. The performance of a network system is provided in Section~\ref{sec:network}. Methods of handling multiple energy sources and improving performance bounds are introduced in Sections~\ref{sec:multipleSources} and~\ref{sec:improvement}, respectively. Related work is discussed in Section~\ref{sec:relatedwork}. The paper is concluded in Section~\ref{sec:conclusion}. 

\section{Basic Notations} \label{sec:background}

We first introduce the basic notations following the convention of stochastic network calculus~\cite{Jia,Jiangbook,Li}. We denote by $\mathcal{F}$ the set of non-negative, wide-sense increasing functions, i.e.,  $$\mathcal{F}= \{f(\cdot): \forall 0\le x\le y, 0\le f(x) \le f(y)\},$$  and by $\bar{\mathcal{F}}$ the set of non-negative, wide-sense decreasing functions, i.e.,  $$\bar{\mathcal{F}}= \{f(\cdot): \forall 0\le x\le y, 0\le f(y) \le f(x)\}.$$ For any random variable $X$, its distribution function, denoted by $$F_X (x) \equiv Prob\{X\le x\},$$ belongs to $\mathcal{F}$, and its complementary distribution function, denoted by $$\bar{F}_X(x)\equiv Prob\{X>x\},$$ belongs to $\bar{\mathcal{F}}$. 

For any function $f(t)$, we use $f'(t)$ to denote its derivative, if it exists.

The following operations will be used in this paper:

\begin{itemize}
\item 
The $(\min,+)$ \textit{convolution} of functions $f$ and $g$ under the $(\min,+)$ algebra~\cite{Chang00, Cruz91a,Le} is defined as: 
\begin{equation}
(f\otimes g) (t) \equiv \inf_{0\le s \le t}\{f(s) + g(t-s)\}.
\end{equation}
\item
The $(\min,+)$ \textit{deconvolution} of functions $f$ and $g$ is defined as: 
\begin{equation}
(f\oslash g) (t) \equiv \sup_{s\ge0}\{f(t+s) - g(s)\}.
\end{equation}
\item The $(\max,+)$ \textit{convolution} of functions $f$ and $g$ is defined as:
\begin{equation}
(f \bar{\otimes} g) (t) \equiv \sup_{0 \le s \le t}\{f(s) + g(t-s)\}.
\end{equation}
\item
 The \textit{pointwise infimum} or
\textit{pointwise minimum} of functions $f$ and $g$
is defined as
\begin{equation}
(f\wedge g) (t) \equiv \min[f(t),g(t)].
\end{equation}
\end{itemize}
In addition, we adopt:
\begin{itemize}
\item $[x]^+ \equiv max\{x,0\}$,
\item $[x]_1 \equiv min\{x, 1\}$.
\end{itemize}

Throughout this paper, we assume that the \textit{energy charging curve} and the \textit{energy discharging curve}, both of which will be defined later, are non-negative and wide-sense increasing functions. In this paper, $C(t)$ and $C^*(t)$ are used to denote the \textit{cumulative} energy amount that has been charged and depleted in the time interval $(0,t]$, respectively. For any $0\le s \le t$, let $C(s,t) \equiv C(t)-C(s)$ and $C^*(s,t) \equiv C^*(t)-C^*(s).$ Similarly, $A(t)$ and $A^*(t)$ are used to denote the \textit{cumulative} amount of data traffic that has arrived and departed in time interval $(0,t]$, respectively, and $S(t)$ is used to denote the cumulative amount of service provided by the system in time interval $(0,t]$. For any $0\le s \le t$, let $A(s,t) \equiv A(t)-A(s), A^*(s,t) \equiv A^*(t)-A^*(s),$ and $S(s,t) \equiv S(t)-S(s).$ By default, $A(0)=A^*(0)=S(0)=0$, and $C(0)=C^*(0)=0$. 

\section{Stochastic Energy Models and Residual Energy} \label{sec:energyModel}
\subsection{Stochastic Energy Charging and Discharging Models}
\begin{definition}\label{def-arrival}\textbf{The stochastic energy charging model: }
The cumulative energy  amount $C(t)$ is said to follow \textit{\underline{s}tochastic \underline{e}nergy \underline{c}harging (s.e.c.)} curves $\alpha_1 \in \mathcal{F}$ and $\alpha_2 \in \mathcal{F}$ with bounding functions $f_1\in \bar{\mathcal{F}}$ and $f_2\in \bar{\mathcal{F}}$, respectively, denoted by $$C\sim_{sec}<f_1,\alpha_1, f_2, \alpha_2>,$$ if for all $t \ge s \ge 0$ and all $x\ge 0$, 
\begin{equation}
Prob\{\inf_{0\le s \le t}[C(s,t)- \alpha_1(t-s) > x]\} \ge f_1(x), \label{eq:sec1}
\end{equation}
and
\begin{equation}
Prob\{\sup_{0\le s \le t}[C(s,t)- \alpha_2(t-s) > x]\} \le f_2(x). \label{eq:sec2}
\end{equation}
We call the curves $\alpha_1$ and $\alpha_2$ the lower curve and the upper curve, respectively, and the functions $f_1$ and $f_2$ the lower bounding function and the upper bounding function, respectively.   
\end{definition}

\begin{remark}The practical meaning of (\ref{eq:sec1}) is to lower bound the cumulative amount of charged energy, i.e., the energy harvester should provide a minimal level of energy with a high probability. The meaning of (\ref{eq:sec2}) is that the cumulative amount of charged energy may be upper bounded. For instance, in the case of solar-powered sensor nodes, the total charged energy should not beyond the best case scenario, e.g., sunny all the time. 
\end{remark}

\begin{remark} The $(\rho, \sigma_1, \sigma_2$)-source in the \textit{harvesting theory} ~\cite{kansal2007power} is a special case of our stochastic energy charging model, by defining $\alpha_1(t) = \rho t - \sigma_1$, $\alpha_2(t) = \rho t + \sigma_2$, and setting the bounding functions to $0$. 
\end{remark}

\begin{definition}\label{def-service}\textbf{The stochastic energy discharging model:} Under the constraint of charged energy amount $C(t)$, the cumulative discharged energy amount $C^*(t)$ is said to have \textit{\underline{s}tochastic \underline{e}nergy \underline{d}ischarging (s.e.d.)} curves $\beta_1 \in \mathcal{F}$ and $\beta_2 \in \mathcal{F}$ with bounding functions $g_1 \in \bar{\mathcal{F}}$ and $g_2 \in \bar{\mathcal{F}}$, respectively, denoted by $$C^*\sim_{sed}<g_1,\beta_1, g_2, \beta_2>,$$ if for all $t\ge 0$ and all $x\ge 0$,  
\begin{equation}
Prob\{C \otimes \beta_1(t) - C^*(t) > x\} \ge g_1(x), \label{eq:sed1}
\end{equation}
and 
\begin{equation}
Prob\{C \otimes \beta_2(t) - C^*(t) > x\} \le g_2(x). \label{eq:sed2}
\end{equation}
We call curves $\beta_1$ and $\beta_2$ the upper curve and the lower curve, respectively, and the functions $g_1, g_2$ the upper bounding function and the lower bounding function, respectively.
\end{definition} 

\begin{remark} The practical meaning of the above definition is as follows. If we use $\beta_1(t)$ to denote the cumulative \textit{virtual} energy discharging amount (i.e., the energy budget that a service should follow), and if we want to use $\beta_1(t)$ to upper bound the actual discharged energy $C^*(t)$, then under the constraint of cumulative charged energy amount $C(t)$, at any time instance $t$, we have \begin{equation}
C^*(t) \le \beta_1(s)+ C(t-s) \label{eq:relation}.
\end{equation}
Note that we should not replace $\beta_1(s)$ with $C^*(s)$ since otherwise we obtain $C^*(t-s)\le C (t-s)$ for any time interval $(s,t]$, which is a constraint too restrictive. Because the inequality~(\ref{eq:relation}) holds for any $s \le t$, we have $C^*(t) \le \inf_{0\leq s \leq t}\{\beta_1(s)+ C(t-s)\}$, which is $C^*(t) \le C\otimes \beta_1(t)$ by the definition of (min,+) convolution. Inequality (\ref{eq:sed1}) in Definition~\ref{def-service} represents the stochastic version of the above relationship. In practice, we may also want to lower bound the discharged energy (with a curve $\beta_2$) to effectively utilize the harvested energy, since overcharged energy beyond the energy storage capacity will be wasted. In this case, the stochastic lower bound is represented by the inequality (\ref{eq:sed2}). 
\end{remark}

Since $C(t)$ and $C^*(t)$ denote the cumulative amount of charged energy and the cumulative amount of discharged energy, respectively, the remaining energy amount, denoted by $E(t)$, in the system at time $t$ can be calculated as:
\begin{equation}
E(t) = C(t) - C^*(t). \label{eq:Et}
\end{equation}

\subsection{Stochastic Analysis on Remaining Energy} \label{sec:SARE}

For any energy-aware scheduling, we first need to understand the \textit{feasible energy depletion region}. Particularly, we need to answer the following question: \textit{what are the lower and upper bounds of $Prob\{E(t) > x\}$?} 

The practical meaning of the lower bound comes from reliability consideration: the remaining energy at any time instance should be with a high probability larger than a minimal threshold value (called \textit{safety threshold}) to guarantee the reliable operation of the system. The upper bound is related to the effective use of the energy source: since the capacity of energy storage is limited (in this case, $x$ could be considered as the upper limit of the energy capacity), tasks should be scheduled to consume energy effectively so that the probability that the system is overcharged is small.

To answer the above question, we start with the following lemma. 

\begin{lemma}\label{lemma1} 
Assume that the \textit{complementary cumulative distribution functions (CCDF)} of random variables $X, Y$ are $\bar{F}_{X} \in \bar{\mathcal{F}}$  and $\bar{F}_{Y} \in \bar{\mathcal{F}}$, respectively. Assume that $f_1(x) \leq \bar{F}_{X}(x) \leq f_2(x)$ and $g_1(x) \leq \bar{F}_{Y}(x) \leq g_2(x)$. Denote $Z=X + Y$. Then no matter whether $X$ and $Y$ are independent or not, there holds for $\forall x \ge 0$, 
\begin{equation} 
\bar{F}_Z(x) \leq f_2 \otimes g_2(x), \label{eq:lemma1-1}
\end{equation} 
and 
\begin{equation} 
\bar{F}_Z(x) \geq f_1 \bar{\otimes} g_1(x) -1. \label{eq:lemma1-2}
\end{equation} 
\end{lemma}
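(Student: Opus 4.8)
The plan is to prove both bounds by elementary event-inclusion arguments, exploiting that the split point $s$ appearing in each convolution indexes a way of partitioning the threshold $x$ between the two summands $X$ and $Y$. The essential observation is that \emph{neither} direction requires independence: the upper bound rests only on the union bound, and the lower bound only on the elementary inequality $Prob\{A\cap B\}\ge Prob\{A\}+Prob\{B\}-1$, both of which hold for arbitrary, possibly dependent, events. This is precisely what underwrites the phrase ``no matter whether $X$ and $Y$ are independent or not.''

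For the upper bound (\ref{eq:lemma1-1}), I would first fix an arbitrary $s$ with $0\le s\le x$ and record the event inclusion
\[
\{X+Y>x\}\subseteq\{X>s\}\cup\{Y>x-s\},
\]
which holds because $X\le s$ together with $Y\le x-s$ would force $X+Y\le x$. Applying the union bound and then the hypotheses $\bar{F}_X\le f_2$ and $\bar{F}_Y\le g_2$ gives $\bar{F}_Z(x)\le\bar{F}_X(s)+\bar{F}_Y(x-s)\le f_2(s)+g_2(x-s)$. Since this holds for every admissible $s$, taking the infimum over $0\le s\le x$ yields $\bar{F}_Z(x)\le(f_2\otimes g_2)(x)$ by the definition of $(\min,+)$ convolution, which is exactly (\ref{eq:lemma1-1}).

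For the lower bound (\ref{eq:lemma1-2}), I would again fix $s$ with $0\le s\le x$, but now invoke the reverse inclusion
\[
\{X>s\}\cap\{Y>x-s\}\subseteq\{X+Y>x\},
\]
so that $\bar{F}_Z(x)\ge Prob\{X>s,\,Y>x-s\}$. Bounding this joint probability from below by $Prob\{X>s\}+Prob\{Y>x-s\}-1$ (valid because $Prob\{A\cup B\}\le1$) and then using $\bar{F}_X\ge f_1$ and $\bar{F}_Y\ge g_1$ gives $\bar{F}_Z(x)\ge f_1(s)+g_1(x-s)-1$. Taking the supremum over $0\le s\le x$ then produces $\bar{F}_Z(x)\ge(f_1\bar{\otimes}g_1)(x)-1$, establishing (\ref{eq:lemma1-2}).

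I anticipate no substantial obstacle here; the lemma is essentially a probabilistic repackaging of the two algebraic operations. The one point that deserves explicit care is confirming that both the union bound and the Bonferroni-type lower bound on an intersection are genuinely dependence-free, since that is the whole content of the claim. A minor technical check is that the hypotheses $\bar{F}_X,\bar{F}_Y\in\bar{\mathcal{F}}$ together with the monotonicity of $f_i,g_i$ keep every term well-defined and make the infimum and supremum meaningful over the range $[0,x]$.
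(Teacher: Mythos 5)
Your proposal is correct and follows essentially the same route as the paper: the lower bound via the inclusion $\{X>s\}\cap\{Y>x-s\}\subseteq\{X+Y>x\}$ plus the Bonferroni inequality and a supremum over the split point, and the upper bound via the complementary inclusion plus the union bound and an infimum (the paper simply cites the literature for the latter, but the standard argument is exactly the one you give). No gaps.
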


The proofs of all lemmas are referred to Appendix of this paper. 

\begin{theorem}\label{theorem1} 
Assume that the system has an energy charging source with cumulative charged energy amount $C(t) \sim_{sec}<f_1,\alpha_1, f_2,\alpha_2>$ and it provides service with cumulative depleted energy amount $C^*(t) \sim_{sed}<g_1,\beta_1, g_2, \beta_2>$. The remaining energy amount of the system at any time instant $t$, $E(t)$, is lower bounded by 
\begin{equation}
Prob\{E(t) > x\} \geq f_1 \bar{\otimes} g_1 (x-\alpha_1 \oslash \beta_1 (0)) -1, \label{theorem1-1}
\end{equation}
and is upper bounded by 
\begin{equation}
Prob\{E(t) > x\} \leq f_2 \otimes g_2 (x-\alpha_2 \oslash \beta_2 (0)), \label{theorem1-2}
\end{equation}
\end{theorem}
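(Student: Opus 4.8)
The plan is to write the remaining energy $E(t)=C(t)-C^*(t)$ as a sum of two random variables, one governed by the charging model and one by the discharging model, and then invoke Lemma~\ref{lemma1}. Concretely, for a fixed discharging curve $\beta_i$ I would insert $C\otimes\beta_i(t)$ and split $E(t)=\big[C(t)-C\otimes\beta_i(t)\big]+\big[C\otimes\beta_i(t)-C^*(t)\big]$. The second bracket is exactly the quantity appearing in Definition~\ref{def-service}, so its CCDF is directly controlled by $g_i$. The first bracket, by the definition of $(\min,+)$ convolution, equals $\sup_{0\le s\le t}[C(s,t)-\beta_i(t-s)]$, a supremum over the same window as the charging model, so it should be comparable to the quantity bounded by $f_i$ in Definition~\ref{def-arrival}. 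I would use $i=2$ for the upper bound and $i=1$ for the lower bound.

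For the upper bound I would set $B_2=C\otimes\beta_2(t)-C^*(t)$ and $A_2=\sup_{0\le s\le t}[C(s,t)-\beta_2(t-s)]$, so that $E(t)=A_2+B_2$ holds pathwise. Inequality~(\ref{eq:sed2}) gives $\bar{F}_{B_2}(x)\le g_2(x)$. For $A_2$, writing $\beta_2(t-s)=\alpha_2(t-s)-[\alpha_2(t-s)-\beta_2(t-s)]$ and using $\alpha_2(u)-\beta_2(u)\le \alpha_2\oslash\beta_2(0)$ for every $u\ge 0$, I obtain the pathwise bound $A_2\le A_2'+\alpha_2\oslash\beta_2(0)$, where $A_2'=\sup_{0\le s\le t}[C(s,t)-\alpha_2(t-s)]$ has CCDF at most $f_2$ by~(\ref{eq:sec2}). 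Hence $Prob\{E(t)>x\}\le Prob\{A_2'+B_2>x-\alpha_2\oslash\beta_2(0)\}$, and the upper-bound half of Lemma~\ref{lemma1} applied to $A_2'+B_2$ yields $f_2\otimes g_2(x-\alpha_2\oslash\beta_2(0))$, which is~(\ref{theorem1-2}).

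For the lower bound I would mirror the construction with $i=1$: $B_1=C\otimes\beta_1(t)-C^*(t)$ has $\bar{F}_{B_1}\ge g_1$ by~(\ref{eq:sed1}), and $A_1=\sup_{0\le s\le t}[C(s,t)-\beta_1(t-s)]$. Here the goal is the reverse pathwise estimate $A_1\ge \inf_{0\le s\le t}[C(s,t)-\alpha_1(t-s)]+\alpha_1\oslash\beta_1(0)$, whose infimum term has CCDF at least $f_1$ by~(\ref{eq:sec1}); this gives $\bar{F}_{A_1}(y)\ge f_1\big(y-\alpha_1\oslash\beta_1(0)\big)$. Feeding $\bar{F}_{A_1}$ and $\bar{F}_{B_1}$ into the lower-bound half of Lemma~\ref{lemma1} and carrying the shift through the $(\max,+)$ convolution then produces $f_1\bar{\otimes} g_1(x-\alpha_1\oslash\beta_1(0))-1$, i.e.~(\ref{theorem1-1}).

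The main obstacle I anticipate is precisely this last pathwise inequality for the lower bound. In the upper bound the correction goes the convenient way: each term $C(s,t)-\beta_2(t-s)$ is dominated by $C(s,t)-\alpha_2(t-s)$ plus the global supremum $\alpha_2\oslash\beta_2(0)$, and taking a supremum preserves the inequality. For the lower bound I must instead extract the deconvolution constant from a single well-chosen index: picking $s^\star=t-u^\star$ where $u^\star$ nearly attains $\sup_{u\ge 0}[\alpha_1(u)-\beta_1(u)]$ gives $A_1\ge C(s^\star,t)-\alpha_1(u^\star)+[\alpha_1(u^\star)-\beta_1(u^\star)]\ge \inf_{0\le s\le t}[C(s,t)-\alpha_1(t-s)]+\alpha_1\oslash\beta_1(0)$. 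The delicate point is that this requires the maximizer $u^\star$ of $\alpha_1-\beta_1$ to lie inside the window $[0,t]$; strictly, I only obtain the shift $\sup_{0\le u\le t}[\alpha_1(u)-\beta_1(u)]\le\alpha_1\oslash\beta_1(0)$, so the clean statement needs the deconvolution at $0$ to be attained within $[0,t]$ (which holds, for example, once $t$ is large enough or the supremum is achieved at a finite lag). I would flag this assumption explicitly; everything else is bookkeeping of the shift through Lemma~\ref{lemma1} and the monotonicity of $f_i,g_i\in\bar{\mathcal{F}}$.
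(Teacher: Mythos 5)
Your proof is correct and follows essentially the same route as the paper: the same decomposition $E(t)=\bigl[C(t)-C\otimes\beta_i(t)\bigr]+\bigl[C\otimes\beta_i(t)-C^*(t)\bigr]$, the same insertion of $\alpha_i$ to extract the shift $\alpha_i\oslash\beta_i(0)$, and the same appeal to Lemma~\ref{lemma1}. The subtlety you flag in the lower bound --- that $\sup_{0\le u\le t}[\alpha_1(u)-\beta_1(u)]$ need not equal $\alpha_1\oslash\beta_1(0)$ unless that supremum is attained within $[0,t]$ --- is genuine, and the paper's own proof silently asserts this equality without justification, so your explicit caveat is if anything more careful than the original.
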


\begin{proof}
We only prove the lower bound, since the upper bound can be easily proved following the same argument using the curves $\alpha_2$ and $\beta_2$, the bounding functions $f_2$ and $g_2$, and Inequality~(\ref{eq:lemma1-1}).

For any $t \geq s >0$, we have 
\begin{align}
E(t) =&   C(t) - C^*(t)  \nonumber \\
=& C(t) - C \otimes \beta_1(t) + C\otimes \beta_1(t) -C^*(t) \nonumber \\ 
  = &  \sup_{0\le s \le t} \{C(s,t) - \alpha_1(t-s) + \alpha_1(t-s) - \beta_1(t-s)\} \nonumber \\
  &+ C\otimes \beta_1(t) -C^*(t) \nonumber \\
 \geq &\sup_{0\le s \le t} \{\inf_{0\le s \le t}\{C(s,t) - \alpha_1(t-s)\}\nonumber \\ 
  & + \alpha_1(t-s) - \beta_1(t-s)\}
  + C\otimes \beta_1(t) -C^*(t) \nonumber \\
 = & \inf_{0\le s \le t}\{C(s,t) - \alpha_1(t-s)\} + \sup_{0\le s \le t} \{\alpha_1(t-s) \nonumber \\ 
 & - \beta_1(t-s)\} + C\otimes \beta_1(t) -C^*(t) \nonumber \\
  =& \inf_{0\le s \le t}\{C(s,t) - \alpha_1(t-s)\} + \sup_{t>0} \{\alpha_1(t) - \beta_1(t)\} \nonumber \\
  &+ C\otimes \beta_1(t) -C^*(t) \nonumber \\
   =& \inf_{0\le s \le t}\{C(s,t) - \alpha_1(t-s)\} + C\otimes \beta_1(t) -C^*(t)  \nonumber \\
  &+ \alpha_1\oslash \beta_1(0) \nonumber \\
\end{align}
If we consider $\inf_{0\le s \le t}\{C(s,t) - \alpha_1(t-s)\}$ and $C\otimes \beta_1(t) -C^*(t)$ as two random variables depending on $t$, Inequality~(\ref{theorem1-1}) follows according to the definitions of $s.e.c.$ and $s.e.d.$ curves and Inequality~(\ref{eq:lemma1-2}) in Lemma~\ref{lemma1}.
\end{proof}

\begin{remark}Theorem~\ref{theorem1} provides the following fundamental guidance in the energy management of systems powered with renewable energy: 
\begin{itemize}
\item Given the statistical feature of a renewable energy source and a safety threshold on remaining energy level, we can find a suitable energy consumption budget to service requirements (i.e., $\beta_1(t)$ in Theorem~\ref{theorem1}) based on (\ref{theorem1-1}), so that the system has sufficient energy with a certain probability.    
\item Given the statistical feature of a renewable energy source and the maximum capacity of energy storage in the system, we can obtain a minimal energy expenditure rate based on~(\ref{theorem1-2}), such that the probability that the system is overcharged is lower than a certain probability. 
\item Overall, given a renewable energy source, a safety threshold energy level, and an upper limit for energy storage, the two inequalities in Theorem~\ref{theorem1} describe the stochastically feasible region of energy budget for services.   
\end{itemize} 
\end{remark}

\section{Stochastic Service Guarantee of A Single Node} \label{sec:singleNode}
\subsection{System Model}

With the above energy models, we next study the problem of providing a stochastic service guarantee given an uncertain energy supply. We start with the single-node case. 

For the paper to be self-contained, we need to briefly introduce the core concepts in traditional stochastic network calculus-- stochastic traffic arrival curves and service curves~\cite{Jiangbook}. For traffic arrivals, we have the following model.
\nop{
\begin{definition}\label{def-tac}\textbf{The $t.a.c.$ model: }
A flow $A(t)$ is said to have a \textit{\underline{t}raffic-\underline{a}mount-\underline{c}entric (t.a.c.)} stochastic arrival curve $\alpha \in \mathcal{F}$ with bounding function $f\in \bar{\mathcal{F}}$, denoted by $$A\sim_{ta}<f,\alpha>,$$ if for all $t \ge s \ge 0$ and all $x\ge 0$, it holds~\cite{Jia,Jiangbook} 
\begin{equation}
Prob\{A(s,t)-\alpha(t-s) > x\} \le f(x).
\end{equation}
\end{definition}
}
\begin{definition}\label{def-vbc} \textbf{The $v.b.c.$ model: }
A flow $A(t)$ is said to have a \textit{\underline{v}irtual-\underline{b}acklog-\underline{c}entric (v.b.c.)} stochastic arrival curve $\alpha \in \mathcal{F}$ with bounding function $f\in \bar{\mathcal{F}}$, denoted by $$A\sim_{vb}<f,\alpha>,$$ if for all $t \ge 0$ and all $x\ge 0$, it holds~\cite{Jia,Jiangbook} 
\begin{equation}
Prob\{\sup_{0\le s \le t} [A(s,t)-\alpha(t-s)] > x\} \le f(x).
\end{equation}
\end{definition}

For service models, we have the followings.
\nop{
\begin{definition}\label{def-ws} \textbf{The $w.s.$ model: }
A server is said to provide a flow $A(t)$ with a \textit{\underline{w}eak \underline{s}tochastic (w.s.) service curve} $\beta \in \mathcal{F}$ with bounding function $g \in \bar{\mathcal{F}}$, denoted by $$S\sim_{ws}<g,\beta>,$$ if for all $t\ge 0$ and all $x\ge 0$, it holds~\cite{Jia,Jiangbook}  
\begin{equation}
Prob\{A\otimes\beta(t)-A^*(t) >x\} \le g(x). 
\end{equation}
\end{definition} 
}

\begin{definition}\label{def-sc} \textbf{The $s.c.$ model: }
A server is said to provide a flow $A(t)$ with a \textit{\underline{s}tochastic service \underline{c}urve (s.c.)} $\beta \in \mathcal{F}$ with bounding function $g \in \bar{\mathcal{F}}$, denoted by $$S\sim_{sc}<g,\beta>,$$ if for all $t\ge 0$ and all $x\ge 0$, it holds~\cite{Jia,Jiangbook}  
\begin{equation}
Prob\{\sup_{0\le s \le t}[A\otimes\beta(s)-A^*(s)] >x\} \le g(x). 
\end{equation}
\end{definition} 

\begin{remark}
The $s.c.$ model is adopted in this paper for ease of expressing the
results, particularly the concatenation property for the network case
analysis in Section~\ref{sec:network}. However, the $s.c.$ model may be too restrictive.
In the literature, a variation of the model is available, which is called
\textit{weak stochastic service curve (w.s)} in~\cite{Jia}. The $w.s.$ model is much less
restrictive and has been widely adopted in the stochastic network calculus
literature. We would like to stress that the $w.s.$ model can also be used
here, and it can be verified (e.g. see~\cite{Jia}) that the delay and backlog
bound results in this paper remain unchanged with the $w.s.$ model. The
affected one is the analysis on the network case, which would have much
complicated expression.
\end{remark}

\begin{definition}\label{def-ssc} \textbf{The $s.s.c.$ model: }
A server is said to provide a flow $A(t)$ with a \textit{\underline{s}trict \underline{s}tochastic service \underline{c}urve (s.s.c.)} $\beta \in \mathcal{F}$ with bounding function $g \in \bar{\mathcal{F}}$, denoted by $$S\sim_{ssc}<g,\beta>,$$ if during any period $(s, t]$ and for any $x\ge 0$, it holds~\cite{Jia,Jiangbook}  
\begin{equation}
Prob\{S(s,t)<\beta(t-s)-x\} \le g(x). 
\end{equation}
\end{definition} 

\begin{remark}
The $s.s.c.$ model is to decouple the service and traffic arrivals. It is useful in energy outage analysis as shown later in Section~\ref{sec:outage}.  
\end{remark}

\nop{
\begin{remark}Both stochastic traffic arrival curve and stochastic service curve are the extension of the corresponding deterministic case. For example, for task scheduling purpose, we can reserve the minimal service rate for a task and in this case, the service model becomes deterministic by setting the bounding function $g$ to $0$.   
\end{remark}
}

The following measures are of interest in service guarantee analysis under network calculus:
\begin{itemize}
\item The backlog $B(t)$ in the system at time $t$ is defined as:
\begin{equation}
B(t) = A(t) - A^*(t).
\end{equation}
\item The delay $D(t)$ at time $t$ is defined as:
\begin{equation} 
D(t)= \inf \{\tau \ge 0: A(t) \le A^*(t+\tau)\}. \label{delay}
\end{equation}
\end{itemize}

\subsection{Stochastic Service Guarantee}

To ease description, we define the following terms:
\begin{definition} \textbf{Power-rate function}~\cite{Zafer09} is a function that translates the amount of service to the amount of consumed energy. We use a generic notation $\mathcal{P}$ to denote the power-rate function.  
\end{definition}

\begin{definition}
\textbf{Energy-oblivious service curve} of a system is the service curve that the system would have if there were no energy constraint. 
\end{definition}

Note that most modern devices, such as computer servers and wireless transceivers, have the capability of adjusting processing/transmission rate. Associated with a rate, there is a corresponding power expenditure that is governed by the power rate function. Generally speaking, a low processing/transmission rate requires low energy consumption. The power-rate function is system dependent. For example, for most encoding schemes in wireless communication, the required power is a convex function of the rate~\cite{Zafer09}. In addition, there are some devices, e.g., Atmel microprocessors, whose energy consumption is dominated by its \textit{on} state. That is, as long as the device is powered on, it consumes energy at a roughly constant speed, no matter whether or not its CPU remains idle or executes tasks. To avoid those different details, we use a generic notation $\mathcal{P}$ to build a broadly-applicable model.  

Given a cumulative energy amount $C(t)$, the service amount that $C(t)$ can support, denoted as $S_C(t)$, can thus be calculated as:
\begin{equation} 
S_C (t) = \int_0^t \mathcal{P}^{-1}(C'(x)) dx,  \label{eq:SC}
\end{equation} 
Similarly, given a service amount $S(t)$, the corresponding energy consumption amount, denoted by  $C_S^*$, can be calculated as: 
\begin{equation} 
C_S^*(t) = \int_0^t \mathcal{P}(S'(x)) dx.  \label{eq:discharging}
\end{equation} 

\begin{remark} We assume that the inverse of power rate function $\mathcal{P}$, the derivative of the energy charging curve $C$, and the derivative of the service curve $S$, all exist. This assumption has been used and justified in~\cite{Zafer09}. This assumption is reasonable, because (1) there is usually a one-to-one mapping between the served data amount and the used energy amount, and (2) cumulative energy charging amount and cumulative service amount are mostly continuous. Nevertheless, even if the above assumption may not be true in some specific situations, numerical approximation can be used to estimate $S_C$ and $C_S^*$.  In the rest of the paper, we will use Equations~(\ref{eq:SC}) and~(\ref{eq:discharging}) without giving further explanation.
\end{remark}

With all notations being introduced, we are ready to answer the following question regarding a QoS guarantee:

Assume that a data flow $A(t)$ with the traffic arrival curve $A\sim_{vb}<f,\alpha>$ is input into a node, which has an energy-oblivious service curve $S\sim_{sc}<g,\beta>$. Assume that the system is powered by an energy source following $C(t) \sim_{sec}<f_1,\alpha_1, f_2, \alpha_2>$. For any time $t$, \textit{what are the stochastic bounds on the flow's delay $D(t)$ and backlog $B(t)$?}  

The main difficulty in solving the above question is that the data flow may not be able to get a service following the service curve $\beta$ due to the energy constraint. In other words, a service is possible only if the system has enough energy. By ``enough energy'', we mean that the system's energy level is above the safety threshold. To simplify presentation, we assume the safety threshold value is $0$ in this section. Otherwise, trivial modification is required for the following performance results.

We have the following important technical lemmas. 

\begin{lemma} \label{lemma-2-0}
Assume that functions $f_1, f_2, f_3 \in \mathcal{F}$, and assume that $f_1(0)=f_2(0)=f_3(0)=0$. For any $t\ge 0$, it holds that
\begin{equation}
f_1 \otimes (f_2\wedge f_3) (t) \le (f_1\otimes f_2 (t)) \wedge f_3(t)  \label{eq:lemma-2-0}
\end{equation}
\end{lemma}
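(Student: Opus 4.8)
The plan is to exploit the structure of the right-hand side, which is a pointwise minimum of the two terms $(f_1\otimes f_2)(t)$ and $f_3(t)$. Writing $L(t) := f_1\otimes(f_2\wedge f_3)(t) = \inf_{0\le s\le t}\{f_1(s) + \min[f_2(t-s),f_3(t-s)]\}$ for brevity, I observe that to prove $L(t)\le (f_1\otimes f_2(t))\wedge f_3(t)$ it suffices to establish the two separate bounds $L(t)\le (f_1\otimes f_2)(t)$ and $L(t)\le f_3(t)$; once both hold, $L(t)$ is a lower bound for each argument of the minimum and hence for the minimum itself, which is exactly Inequality~(\ref{eq:lemma-2-0}).

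For the first bound I would use monotonicity of the infimum. For every $s$ with $0\le s\le t$ we trivially have $\min[f_2(t-s),f_3(t-s)]\le f_2(t-s)$, so that $f_1(s)+\min[f_2(t-s),f_3(t-s)]\le f_1(s)+f_2(t-s)$. Taking the infimum over $0\le s\le t$ on both sides preserves the inequality and gives $L(t)\le \inf_{0\le s\le t}\{f_1(s)+f_2(t-s)\} = (f_1\otimes f_2)(t)$ directly from the definition of the $(\min,+)$ convolution.

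For the second bound the idea is to test the infimum at the single index $s=0$. Since the infimum over $0\le s\le t$ is no larger than the value of the expression at $s=0$, I get $L(t)\le f_1(0)+\min[f_2(t),f_3(t)]$. This is where the normalization hypothesis enters: $f_1(0)=0$ annihilates the first term, leaving $L(t)\le \min[f_2(t),f_3(t)]\le f_3(t)$. Recognizing that the $f_3(t)$ bound comes essentially for free by evaluating at $s=0$ and invoking $f_1(0)=0$ is the only mildly non-obvious point in the whole argument, and it is the main thing to get right; note that, interestingly, only $f_1(0)=0$ is actually needed here, while the hypotheses $f_2(0)=f_3(0)=0$ play no role in this particular inequality.

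Combining the two bounds, $L(t)\le \min[(f_1\otimes f_2)(t),f_3(t)] = (f_1\otimes f_2(t))\wedge f_3(t)$ for all $t\ge 0$, which completes the proof. The argument is elementary and requires no independence, continuity, or differentiability assumptions beyond membership in $\mathcal{F}$ and the normalization of $f_1$, so I would expect no real obstacle — the only care needed is to keep the direction of the infimum inequalities straight and to use the $s=0$ test point rather than attempting to manipulate the convolution as a whole.
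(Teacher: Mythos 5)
Your proof is correct, and it takes a cleaner route than the paper's. You decompose the target inequality into the two separate bounds $L(t)\le (f_1\otimes f_2)(t)$ and $L(t)\le f_3(t)$, getting the first from the pointwise domination $f_2\wedge f_3\le f_2$ inside the infimum and the second from the single test point $s=0$ together with $f_1(0)=0$. The paper instead first replaces $(f_2\wedge f_3)(t-s)$ by $f_2(t-s)\wedge f_3(t)$ using the monotonicity of $f_3$, and then runs a two-case analysis on whether $f_1(s_1)+f_2(t-s_1)\le f_3(t)$ at a point $s_1$ assumed to attain the infimum of $f_1(s)+f_2(t-s)$; its second case also ends up testing at $s=0$ and invoking $f_1(0)=0$. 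Your argument buys two things: it avoids the implicit assumption that the infimum is attained (which the paper does not justify and which need not hold for general $f\in\mathcal{F}$), and it makes transparent exactly which hypotheses are used --- as you note, only $f_1(0)=0$ (plus the non-negativity built into $\mathcal{F}$) is needed, while $f_2(0)=f_3(0)=0$ and even the monotonicity of $f_3$ play no role. The paper's route, by contrast, uses the monotonicity of $f_3$ in its first reduction step. Both proofs establish the same inequality; yours is the more economical and the more robust of the two.
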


\begin{lemma} \label{lemma-2}
If $X_1\geq X_2\geq 0$ and $X_3\geq X_4 \geq 0$, it holds 
\begin{equation}
X_1 \wedge X_4 - X_2 \wedge X_3 \leq X_1-X_2 + X_3-X_4. \label{eq:lemma-2}
\end{equation}
\end{lemma}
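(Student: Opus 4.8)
The plan is to prove the inequality by a direct case analysis on which of the two arguments realizes the minimum $X_2 \wedge X_3$ on the left-hand side. The guiding observation is that the right-hand side $(X_1 - X_2) + (X_3 - X_4)$ is a sum of two nonnegative quantities, since the hypotheses give $X_1 \geq X_2$ and $X_3 \geq X_4$; this nonnegative ``slack'' is exactly what I will exploit to absorb the terms coming from the minima. The only real subtlety is that a minimum $X_2 \wedge X_3$ cannot in general be lower-bounded by a single one of its arguments, so a case split is unavoidable, and in each branch one must pair it with the correct upper bound for $X_1 \wedge X_4$ (either $X_1 \wedge X_4 \leq X_1$ or $X_1 \wedge X_4 \leq X_4$).

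First I would treat the case $X_2 \leq X_3$, so that $X_2 \wedge X_3 = X_2$. Here I upper-bound the other minimum crudely by $X_1 \wedge X_4 \leq X_1$, giving $X_1 \wedge X_4 - X_2 \wedge X_3 \leq X_1 - X_2$. Since $X_3 - X_4 \geq 0$ I may append it for free, obtaining $X_1 - X_2 \leq (X_1 - X_2) + (X_3 - X_4)$, which is the claim in this branch.

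Next I would treat the complementary case $X_3 \leq X_2$, so that $X_2 \wedge X_3 = X_3$. Now I switch to the other upper bound $X_1 \wedge X_4 \leq X_4$ and chain it with the standing hypothesis $X_4 \leq X_3$ to conclude $X_1 \wedge X_4 \leq X_3$, so that the left-hand side $X_1 \wedge X_4 - X_3$ is at most $0$. Since the right-hand side is nonnegative by the two ordering hypotheses, the chain $X_1 \wedge X_4 - X_3 \leq 0 \leq (X_1 - X_2) + (X_3 - X_4)$ closes this case. Ties ($X_2 = X_3$) may be assigned to either branch without affecting the argument.

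I expect no genuine obstacle: the statement is elementary once the case split is made, and the two branches are symmetric in spirit. The single point requiring care is the bookkeeping, namely matching $X_2$ with the bound routed through $X_1$ and matching $X_3$ with the bound routed through $X_4$, so that the nonnegative remainders $X_1 - X_2$ and $X_3 - X_4$ end up on the correct side of each inequality.
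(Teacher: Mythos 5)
Your proof is correct and follows essentially the same route as the paper: an elementary case analysis on the minima, exploiting that both $X_1-X_2$ and $X_3-X_4$ are nonnegative. The only difference is organizational --- you split into two cases on which argument realizes $X_2\wedge X_3$, whereas the paper enumerates four cases on both minima; your version is slightly more economical but not a different argument.
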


\begin{lemma}\label{lemma-2-3} \cite{Jiangbook}
Consider a random variable $X$. For any $x \ge 0$, $Prob\{[X]^+ > x\} = Prob\{X >x\}.$
\end{lemma}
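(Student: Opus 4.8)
The plan is to prove the probabilistic identity by first establishing the corresponding pointwise (sample-path) identity of events, and then taking probabilities on both sides. Concretely, I would show that for every fixed $x \ge 0$ the two events $\{[X]^+ > x\}$ and $\{X > x\}$ coincide as subsets of the underlying sample space, after which the claimed equality of probabilities is immediate.

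First I would unfold the definition $[X]^+ = \max\{X,0\}$ and split into the two cases determined by the sign of $X$. If $X \ge 0$, then $[X]^+ = X$, so the conditions $[X]^+ > x$ and $X > x$ are literally the same. If $X < 0$, then $[X]^+ = 0$, and since we restrict to $x \ge 0$ the condition $[X]^+ > x$ becomes $0 > x$, which never holds; simultaneously $X > x$ cannot hold either, because $X < 0 \le x$. Thus on the region $\{X < 0\}$ neither event occurs, while on the region $\{X \ge 0\}$ the two events agree verbatim. Combining the two regions yields the set equality $\{[X]^+ > x\} = \{X > x\}$ for all $x \ge 0$.

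Having the events equal, I would conclude $Prob\{[X]^+ > x\} = Prob\{X > x\}$ directly, since equal events carry equal probability. The only point requiring care is the hypothesis $x \ge 0$: this is precisely what rules out the degenerate situation in which the clipping at zero creates a discrepancy, because for $x < 0$ the identity fails (there $[X]^+ > x$ always holds, as $[X]^+ \ge 0 > x$, whereas $X > x$ need not). I do not anticipate any genuine obstacle, since the argument is an elementary case analysis on the sign of $X$; the main thing to keep visible throughout is the restriction $x \ge 0$ together with the fact that $[X]^+ \ge 0$ always holds.
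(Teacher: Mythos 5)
Your argument is correct: for $x \ge 0$ the events $\{[X]^+ > x\}$ and $\{X > x\}$ coincide pointwise (the case split on the sign of $X$ handles this cleanly), so the probabilities agree. The paper itself gives no proof of this lemma --- it is cited from the stochastic network calculus book --- and your elementary set-equality argument is exactly the standard one; the remark that the identity breaks for $x < 0$ correctly explains why the hypothesis $x \ge 0$ is needed.
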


\begin{theorem} \label{theorem2}
Assume that a node has an energy-oblivious service curve $S\sim_{sc}<g,\beta>$. Assume that the node is powered with an energy source following $C(t)\sim_{sec}<f_1, \alpha_1, f_2, \alpha_2>$.  The actual service available to the task, denoted by $S_e(t)$, follows $S_e(t) \sim_{sc} <\dot{g}, \beta \wedge \dot{\alpha_2}>$, where 
\begin{equation} 
\dot{g}(x) = g \otimes f_2(x),  \label{eq:g'}
\end{equation} 
\begin{equation}
\dot{\alpha_2}(t) = \int_{0}^t \mathcal{P}^{-1}(\alpha_2(x)) dx. \label{eq:dotAlpha}
\end{equation}
\end{theorem}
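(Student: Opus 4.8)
The plan is to verify directly the defining inequality of the $s.c.$ model for $S_e$, i.e. to show that $Prob\{\sup_{0\le s\le t}[A\otimes(\beta\wedge\dot{\alpha_2})(s)-A^*(s)]>x\}\le g\otimes f_2(x)$, by decomposing the total service shortfall into a \emph{nominal} part controlled by $g$ and an \emph{energy} part controlled by $f_2$, and then recombining the two bounding functions through the convolution estimate of Lemma~\ref{lemma1}. The two structural tools that make the composite curve $\beta\wedge\dot{\alpha_2}$ tractable are Lemma~\ref{lemma-2-0} (to push the $(\min,+)$ convolution inside the pointwise minimum) and Lemma~\ref{lemma-2} (to split a difference of two minima into a sum of two differences).

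First I would peel off the convolution. Since $A(0)=0$, $\beta(0)=0$ and $\dot{\alpha_2}(0)=\int_0^0\mathcal{P}^{-1}(\alpha_2)=0$, Lemma~\ref{lemma-2-0} with $(f_1,f_2,f_3)=(A,\beta,\dot{\alpha_2})$ gives $A\otimes(\beta\wedge\dot{\alpha_2})(s)\le(A\otimes\beta)(s)\wedge\dot{\alpha_2}(s)$, so it suffices to bound $\sup_{0\le s\le t}[(A\otimes\beta)(s)\wedge\dot{\alpha_2}(s)-A^*(s)]$. Next I would exhibit the actual departure as the smaller of the two capacities governing it: the departure $A_\beta^*$ that the energy-oblivious server would produce (which obeys the given $S\sim_{sc}<g,\beta>$) and the cumulative service $S_C(s)=\int_0^s\mathcal{P}^{-1}(C'(u))du$ that the charged energy can sustain, writing $A^*(s)\ge A_\beta^*(s)\wedge S_C(s)$. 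The energy side is then moved into the service domain: using the monotonicity of $\mathcal{P}^{-1}$ together with the s.e.c. upper estimate $Prob\{\sup_s[C(s,t)-\alpha_2(t-s)]>x\}\le f_2(x)$, I would show that the energy slack $V\equiv\big[\sup_{0\le s\le t}(S_C(s)-\dot{\alpha_2}(s))\big]^+$ has CCDF bounded by $f_2$, with $\dot{\alpha_2}$ exactly as defined in~(\ref{eq:dotAlpha}).

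With these ingredients, Lemma~\ref{lemma-2} applied to $X_1=(A\otimes\beta)(s)$, $X_2=A_\beta^*(s)$, $X_3=S_C(s)$, $X_4=\dot{\alpha_2}(s)$ yields the deterministic bound $(A\otimes\beta)(s)\wedge\dot{\alpha_2}(s)-A_\beta^*(s)\wedge S_C(s)\le[(A\otimes\beta)(s)-A_\beta^*(s)]+[S_C(s)-\dot{\alpha_2}(s)]$; taking $\sup_s$, passing to positive parts via Lemma~\ref{lemma-2-3}, and using subadditivity of the supremum identifies the right-hand side with $U+V$, where $U\equiv\big[\sup_{0\le s\le t}((A\otimes\beta)(s)-A_\beta^*(s))\big]^+$ has CCDF at most $g$ and $V$ has CCDF at most $f_2$. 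Finally, regarding $U$ and $V$ as two random variables and invoking the upper bound~(\ref{eq:lemma1-1}) of Lemma~\ref{lemma1} for their sum gives $\bar{F}_{U+V}(x)\le g\otimes f_2(x)=\dot{g}(x)$, which is the claimed bound and, importantly, holds whether or not the nominal and energy randomness are independent.

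I expect the main obstacle to be the third step, specifically guaranteeing the two ordering hypotheses of Lemma~\ref{lemma-2}, namely $(A\otimes\beta)(s)\ge A_\beta^*(s)$ and $S_C(s)\ge\dot{\alpha_2}(s)$, which do not hold for an arbitrary reference time; the standard remedy is to evaluate the differences at the start of the current backlogged period, where $A^*$ coincides with $A$ and the relevant infima and suprema are attained, and to rely on Lemma~\ref{lemma-2-3} so that only positive parts matter. A secondary delicate point is the energy-to-service conversion: translating the cumulative bound on $C$ against $\alpha_2$ into a bound on $S_C=\int\mathcal{P}^{-1}(C')$ against $\dot{\alpha_2}=\int\mathcal{P}^{-1}(\alpha_2)$ uses monotonicity of $\mathcal{P}^{-1}$ and, if one wants the estimate interval-by-interval rather than only at the endpoint, an appeal to concavity of $\mathcal{P}^{-1}$ (Jensen's inequality), which is precisely the regularity of $\mathcal{P}$ already assumed in the model.
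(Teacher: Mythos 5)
Your proposal follows essentially the same route as the paper's proof: the identity $S_e = S\wedge S_C$ (your $A^*\ge A_\beta^*\wedge S_C$ is the same decomposition), Lemma~\ref{lemma-2-0} to push the convolution inside the pointwise minimum, Lemma~\ref{lemma-2} to split the difference of minima into the nominal and energy shortfalls, the $s.c.$ and $s.e.c.$ definitions together with Lemma~\ref{lemma-2-3} to bound the two pieces by $g$ and $f_2$, and the upper bound of Lemma~\ref{lemma1} to combine them into $g\otimes f_2$ without any independence assumption. The two subtleties you flag at the end (the ordering hypotheses of Lemma~\ref{lemma-2} and the conversion of the bound on $C$ versus $\alpha_2$ into one on $S_C$ versus $\dot{\alpha_2}$) are precisely the points the paper handles only implicitly, via the positive parts and the remark that applying $\mathcal{P}^{-1}$ and integrating ``does not change the bounding function,'' so they refine rather than alter the argument.
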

\begin{proof} It is clear that at any time $t\ge 0$, the actual service provided by the system, 
\begin{equation}
 S_e(t) = \left\{ 
  \begin{array}{l l}
    S(t) & \quad \text{if $S(t) \leq S_C(t)$ }\\
    S_C(t) & \quad \text{otherwise,}\\
  \end{array} \right.
\end{equation}
where $S_C$ is defined by~(ref{eq:SC}). This is because if $S(t) \le S_C (t)$, the energy constraint does not play a role, and the actual service amount is equal to $S(t)$. Otherwise, the actual service is equal to the amount allowed by the energy constraint, i.e., $S_C(t)$. In other words, $S_e(t) = S(t)\wedge S_C(t)$.

Next, we prove that $S_e(t) \sim_{sc} <\dot{g}, \beta \wedge \dot{\alpha_2}>$. Denote the arrival flow as $A(t)$ and the output flow as $A^*(t)$, which is equal to $S_e(t)$. For any time $0 \le s\le t$, based on Lemma~\ref{lemma-2-0} and Lemma~\ref{lemma-2}, we have 
\begin{align}
A\otimes (\beta \wedge \dot{\alpha_2}) (s) - &A^*(s) \le (A \otimes \beta (s)) \wedge \dot{\alpha_2} (s) - A^*(s) \nonumber \\
  &= (A \otimes \beta (s)) \wedge \dot{\alpha_2} (s) - S(s)\wedge S_C(s) \nonumber \\ 
  & \le [A\otimes \beta(s) - A^*(s)]^+ + [S_C(s) - \dot{\alpha_2}(s)]^+ \nonumber \\
\end{align}

Note that in the above, the first inequality is due to Lemma~\ref{lemma-2-0}, the second equality is because that $A^*(s)= S_e(s)$, and the last inequality is based on Lemma~\ref{lemma-2}.

We thus have 
\begin{align}
&\sup_{0\le s \le t} \{A  \otimes (\beta \wedge \dot{\alpha_2}) (s) - A^*(s)\} \nonumber \\
& \le \sup_{0 \le s \le t} \{[A\otimes \beta(s) - A^*(s)]^+ + [S_C(s) - \dot{\alpha_2}(s)]^+\} \nonumber \\
& \le \sup_{0 \le s \le t} \{[A\otimes \beta(s) - A^*(s)]^+\} + \sup_{0 \le s \le t}\{[S_C(s) - \dot{\alpha_2}(s)]^+\} \nonumber \\ 
\end{align}

Based on the definitions of the \textit{s.c.} curve and the \textit{s.e.c.} curve, and Lemma~\ref{lemma-2-3}, we have for any $x >0$, $0\le s\le t$,
\begin{equation} 
Prob\{\sup_{0\le s\le t} [A\otimes \beta(s) - A^*(s)]^+ > x\} \le g(x),\label{eq:theorem2-2}
\end{equation}
\begin{equation}
Prob\{\sup_{0\le s\le t} [S_C(s) - \dot{\alpha_2}(s)]^+ > x\} \le f_2(x).\label{eq:theorem2-3}
\end{equation}

For~(\ref{eq:theorem2-3}), we remind the readers that the calculations of $S_C$ and $\dot{\alpha_2}$ \emph{do not} change the bounding function. The theorem follows based on (\ref{eq:theorem2-2}), (\ref{eq:theorem2-3}), Inequality (\ref{eq:lemma1-1}), and the definition of \textit{s.c.} service curve. 
\end{proof}

We have the following theorem for service guarantee. 

\begin{theorem} \label{theorem3}
Assume that a data flow $A(t)$ with the traffic arrival curve $A\sim_{vb}<f,\alpha>$ is input into a node. Assume that the node has an energy-oblivious service curve $S\sim_{sc}<g,\beta>$. Assume that the node is powered with energy supply following $C(t)\sim_{sec}<f_1, \alpha_1, f_2, \alpha_2>$.
\begin{itemize}
\item  For any time $t\ge0, x\ge0$, 
\begin{equation} 
Prob\{D(t) > h(\alpha(t)+x, \beta \wedge \dot{\alpha_2}(t))\} \le f\otimes \dot{g}(x) 
\end{equation} 
where $\dot{g}$ is defined with (\ref{eq:g'}), $\dot{\alpha_2}$ is defined with (\ref{eq:dotAlpha}), and $h(\alpha(t)+x, \beta \wedge \dot{\alpha_2}(t))$ is the maximum horizontal distance between functions $\alpha(t)+x$ and $\beta \wedge \dot{\alpha_2}(t)$.
\item For any $t>0$ and $x\geq 0$, the backlog $B(t)$ is bounded by 
\begin{equation} 
Prob\{B(t) > x \} \le f\otimes \dot{g}(x-\alpha \oslash (\beta \wedge \dot{\alpha_2})(0)). 
\end{equation} 
\end{itemize}

\end{theorem}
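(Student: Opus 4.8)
The plan is to first collapse the energy constraint into a single effective service description, and then run the standard stochastic-network-calculus delay/backlog machinery on it. Theorem~\ref{theorem2} already tells us that the service actually seen by the flow satisfies $S_e\sim_{sc}<\dot{g},\beta\wedge\dot{\alpha_2}>$. Writing $\beta_e:=\beta\wedge\dot{\alpha_2}$, the situation is now exactly that of a \textit{v.b.c.} arrival $A\sim_{vb}<f,\alpha>$ fed into a server offering an \textit{s.c.} service curve $<\dot{g},\beta_e>$, so both bounds should fall out of a joint treatment of the arrival-curve random variable and the service-curve random variable through Lemma~\ref{lemma1}.

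For the backlog bound I would start from $B(t)=A(t)-A^*(t)$ and insert $A\otimes\beta_e(t)$, writing
\begin{equation}
B(t)=\bigl[A(t)-A\otimes\beta_e(t)\bigr]+\bigl[A\otimes\beta_e(t)-A^*(t)\bigr].\nonumber
\end{equation}
The first bracket equals $\sup_{0\le s\le t}\{A(s,t)-\beta_e(t-s)\}$; splitting $A(s,t)-\beta_e(t-s)=[A(s,t)-\alpha(t-s)]+[\alpha(t-s)-\beta_e(t-s)]$ and taking suprema separately bounds it by $\sup_{0\le s\le t}[A(s,t)-\alpha(t-s)]+\alpha\oslash\beta_e(0)$. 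The second bracket is dominated by $\sup_{0\le s\le t}[A\otimes\beta_e(s)-A^*(s)]$. By the \textit{v.b.c.} and \textit{s.c.} definitions these two suprema are random variables with CCDFs bounded by $f$ and $\dot{g}$, respectively; treating them as $X$ and $Y$ in Lemma~\ref{lemma1} and absorbing the deterministic shift $\alpha\oslash\beta_e(0)$ yields $Prob\{B(t)>x\}\le f\otimes\dot{g}(x-\alpha\oslash\beta_e(0))$, which is the stated bound.

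For the delay bound I would use that $D(t)>\tau$ is equivalent to $A(t)>A^*(t+\tau)$, and set $\tau=h(\alpha+x,\beta_e)$. Invoking the \textit{s.c.} property at the instant $t+\tau$ gives $A^*(t+\tau)\ge A\otimes\beta_e(t+\tau)-Y$ with $\bar{F}_Y\le\dot{g}$, so on the event $\{D(t)>\tau\}$ we have $A(t)-A\otimes\beta_e(t+\tau)+Y>0$. Expanding the convolution, bounding $A(s,t)-\alpha(t-s)$ by the arrival-curve variable $X$, and using the defining property of the horizontal distance, namely $\alpha(v)+x\le\beta_e(v+\tau)$ for all $v\ge0$, should turn this into $X+Y>x$; Lemma~\ref{lemma1} then delivers the claimed $f\otimes\dot{g}(x)$.

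The hard part is precisely this last conversion. Because the delay is read off at the \emph{future} instant $t+\tau$, the convolution $A\otimes\beta_e(t+\tau)$ also ranges over split points $s\in(t,t+\tau]$, i.e.\ over arrival increments that the \textit{v.b.c.} curve (which only constrains increments ending at $t$) does not control. One must argue that these future split points cannot make the infimum small enough to matter --- essentially that the horizontal-distance budget $\tau$ together with the growth of $\beta_e$ dominates the lag the server is permitted to accumulate --- so that the supremum defining the arrival-curve term is effectively attained at some $s\le t$. Making this step rigorous (and checking that the boundary case $X+Y=x$ is absorbed by the CCDF bound) is where the real care is needed; by contrast the backlog bound and the Theorem~\ref{theorem2} reduction are essentially mechanical.
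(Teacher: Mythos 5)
Your route is exactly the paper's: its entire proof of this theorem is to apply Theorem~\ref{theorem2} to obtain the effective service curve $<\dot{g},\beta\wedge\dot{\alpha_2}>$ and then invoke the standard delay and backlog bounds of stochastic network calculus (Chapter 5 of~\cite{Jiangbook}), which is precisely your reduction. The additional detail you supply --- including the honestly flagged subtlety about split points $s\in(t,t+\tau]$ in the delay argument --- belongs to the proofs of those cited standard results, which the paper does not reprove.
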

   
\begin{proof} Theorem~\ref{theorem3} follows directly by applying Theorem~\ref{theorem2} to existing results of service guarantee in traditional stochastic network calculus (refer to Chapter $5$ of~\cite{Jiangbook}). 
\end{proof}   
   
\subsection{The Danger of Energy-Oblivious Service} \label{sec:outage} 

An argument for using deterministic energy management is that the system would be safe if the average energy depletion rate equals the predicted energy replenishment rate. Since no absolute guarantee can be made on the accuracy of energy prediction, there is a chance that an energy outage occurs, i.e., the remaining energy-level is lower than the safety threshold. In this section, we show that the energy outage probability is non-negligible. 

\begin{theorem} \label{theoremEnergaOutage}
Assume that a node provides an energy-oblivious service curve $S\sim_{ssc}<g,\beta>$. Assume that the node is powered with energy supply following $C(t)\sim_{sec}<f_1, \alpha_1, f_2, \alpha_2>$. Denote $C_S^*(t) = \int_0^t \mathcal{P}(S'(x)) dx$, and $E(t) = C(t)-C_S^*(t)$. For any time $t\ge0$, given the energy safety threshold $x \ge0$,
\begin{equation}
Prob\{E(t) < x\} \ge 1-f_2 \otimes g (x-\alpha_2 \oslash \dot{\beta} (0)), \label{eq:Energyoutage} 
\end{equation}
where $\dot{\beta}(t) = \int_0^t \mathcal{P}(\beta(x)) dx$.
\end{theorem}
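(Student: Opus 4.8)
The plan is to establish the complementary tail bound $Prob\{E(t)\ge x\}\le f_2\otimes g(x-\alpha_2\oslash\dot{\beta}(0))$ and then take complements, reusing the skeleton of the upper-bound half of Theorem~\ref{theorem1}. The charging side is treated exactly as there, through the pair $(\alpha_2,f_2)$; the novelty is that the discharging side, previously furnished by an s.e.d. curve, must now be synthesized from the strict service curve $S\sim_{ssc}<g,\beta>$ together with the power-rate function $\mathcal{P}$. Concretely, I would first isolate the charging term by setting $X:=\sup_{0\le s\le t}[C(s,t)-\alpha_2(t-s)]$, so that Definition~\ref{def-arrival} (Inequality~(\ref{eq:sec2})) gives $Prob\{X>y\}\le f_2(y)$, and evaluating the supremum at $s=0$ (recall $C(0)=0$) yields the pointwise estimate $C(t)\le\alpha_2(t)+X$.

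Second, and this is the heart of the argument, I would convert the strict service guarantee into an energy-discharge tail bound. Writing $Y:=\dot{\beta}(t)-C_S^*(t)$ with $C_S^*$ as in~(\ref{eq:discharging}), the target is $Prob\{Y>y\}\le g(y)$, equivalently $Prob\{C_S^*(t)<\dot{\beta}(t)-y\}\le g(y)$. The starting point is Definition~\ref{def-ssc} taken at $s=0$, namely $Prob\{S(t)<\beta(t)-y\}\le g(y)$, which I would push through the map $S\mapsto C_S^*$ induced by $\mathcal{P}$ so that the event $\{S(t)\ge\beta(t)-y\}$ forces $\{C_S^*(t)\ge\dot{\beta}(t)-y\}$; monotonicity of $\mathcal{P}$ (more service rate costs more power) is what singles out $\dot{\beta}$ as the energy counterpart of $\beta$.

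Third, I would assemble the pieces just as in the proof of Theorem~\ref{theorem1}. Combining the two estimates gives $E(t)=C(t)-C_S^*(t)\le[\alpha_2(t)-\dot{\beta}(t)]+X+Y$, and since $\alpha_2(t)-\dot{\beta}(t)\le\sup_{r\ge0}[\alpha_2(r)-\dot{\beta}(r)]=\alpha_2\oslash\dot{\beta}(0)$ we obtain $E(t)\le\alpha_2\oslash\dot{\beta}(0)+X+Y$. Applying Lemma~\ref{lemma1}, Inequality~(\ref{eq:lemma1-1}), to $Z:=X+Y$ (which, crucially, needs no independence between $X$ and $Y$) yields $\bar{F}_Z\le f_2\otimes g$, hence $Prob\{E(t)>x\}\le f_2\otimes g(x-\alpha_2\oslash\dot{\beta}(0))$; complementing, together with the usual identification of $<$ and $\le$ for a continuous energy process, delivers the stated inequality~(\ref{eq:Energyoutage}).

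The hard part is the second step. The strict service curve constrains only the cumulative service $S(t)$, whereas $C_S^*(t)=\int_0^t\mathcal{P}(S'(x))\,dx$ depends on the entire rate profile $S'(\cdot)$ through the nonlinear $\mathcal{P}$, so a cumulative-service bound does not, in general, lower-bound the cumulative energy with the same slack $y$ and the same bounding function $g$. I expect the translation to close cleanly only under monotonicity of $\mathcal{P}$ read at the rate level (or, more conservatively, via a convexity/Jensen estimate), and pinning down the precise hypothesis on $\mathcal{P}$ under which $S\sim_{ssc}<g,\beta>$ implies the energy-domain bound $Prob\{C_S^*(t)<\dot{\beta}(t)-y\}\le g(y)$ is where the genuine care is required; everything else is the $(\min,+)$ bookkeeping already validated in Theorem~\ref{theorem1} and Lemma~\ref{lemma1}.
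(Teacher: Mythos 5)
Your proposal follows essentially the same route as the paper: the same three-way decomposition $E(t) = [C(t)-\alpha_2(t)] + [\dot{\beta}(t)-C_S^*(t)] + [\alpha_2(t)-\dot{\beta}(t)]$, the same bounding of the first term via the s.e.c. supremum and of the third by $\alpha_2 \oslash \dot{\beta}(0)$, and the same invocation of Lemma~\ref{lemma1}, Inequality~(\ref{eq:lemma1-1}), to combine the two tails without any independence assumption. The only structural divergence is minor: the paper bounds the discharge term by the full supremum $\sup_{0\le s\le t}\{\dot{\beta}(s,t)-C_S^*(s,t)\}$, matching the per-interval form of Definition~\ref{def-ssc}, rather than the single $s=0$ instance you use. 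The step you single out as the hard part --- transferring the tail bound $g$ from the service domain to the energy domain --- is exactly the step the paper dispatches with the one-line parenthetical that ``the calculations of $\dot{\beta}$ and $C_S^*$ do not change the bounding function''; your observation that this transfer requires an additional hypothesis on $\mathcal{P}$ (since $C_S^*(t)$ depends on the whole rate profile $S'(\cdot)$ and not merely on the cumulative value $S(t)$ constrained by the s.s.c. model) is legitimate, and the paper does not resolve it either.
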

\begin{proof}
For any $t \ge s \ge 0$, we have
\begin{align}
E(t) =& C(t) - C_S^*(t)  \nonumber \\
     = & C(t) -\alpha_2(t) + \dot{\beta}(t)-C_S^*(t) + \alpha_2(t)-\dot{\beta}(t)\nonumber \\
     \le & \sup_{0\le s \le t}\{C(s,t) -\alpha_2(t-s)\} \nonumber \\
     & + \sup_{0\le s \le t} \{\dot{\beta}(s,t)-C_S^*(s,t)\} 
      + \sup_{t\ge 0} \{\alpha_2(t)-\dot{\beta}(t)\}\nonumber \\
    =& \sup_{0\le s \le t}\{C(s,t) -\alpha_2(t-s)\}\nonumber \\
     & + \sup_{0\le s \le t} \{\dot{\beta}(s,t)-C_S^*(s,t)\} 
      + \alpha_2 \oslash \dot{\beta} (0). 
\end{align}
Based on Inequality~(\ref{eq:lemma1-1}), the definition of $s.e.c.$ curve, and the definition of $s.s.c.$ curve (we note again that the calculations of $\dot{\beta}$ and $C_S^*$ do not change the bounding function), we obtain for any $x>0$,
\begin{equation}
Prob\{E(t) > x\} \le f_2 \otimes g (x-\alpha_2 \oslash \dot{\beta} (0)).  
\end{equation}
The theorem follows.       
\end{proof}

As an example, assume that there exists an upper curve on the energy charging rate of $\alpha_2 (t) = \rho t + \sigma$, with a bounding function $f_2(x)= e^{-(x+2)}$, where $\sigma$ is a given safety threshold energy level and $\rho$ is the average energy charging rate. Assume that the node could provide energy-oblivious service $S~\sim_{s.c.} <\beta, g>$, where $g(x)= e^{-(x+2)}$. Assume that $\dot{\beta}(t)= \int_0^t \mathcal{P}(\beta(x)) dx =\rho t$. Clearly, the average energy charging rate is equal to the average energy discharging rate and initially the energy level at the node is safe. With Theorem~\ref{theoremEnergaOutage}, however, it is easy to calculate that the energy outage probability is no less than $72.9\%$, regardless of the time and the values of $\rho$ and $\sigma$.             

\begin{remark}
Theorem~\ref{theoremEnergaOutage} and the above example illustrate the danger of energy-oblivious service. Since a service is practically feasible only if it does not cause an energy outage, any service scheduling algorithm should consider the energy constraint. This is the main reason why in Section~\ref{sec:energyModel} the energy discharging model is inherently coupled with the energy charging model and why in our analysis for service guarantee, we enforce an energy constraint on the energy-oblivious service.   
\end{remark}

\section{Stochastic Service Guarantee for a Networked System} \label{sec:network}
In the previous section, we analyzed the single node case. Here, we study the service guarantee question in a network for which the network nodes are powered with a renewable energy source. It is well-known that we cannot simply add the delay bounds for each individual nodes to obtain the end-to-end delay bound along a path~\cite{CiucuScaling,Le}, since otherwise the network-wide delay bounds would be too loose to be useful. As a solution, the concatenation property of service curves should be proved and applied~\cite{Jiangbook}. 

Based on Theorem~\ref{theorem2} and the concatenation property of $s.c.$ service model, the following theorem holds immediately. 
\begin{theorem} \label{theorem4}
Consider a traffic flow passing through of a network of $N$ nodes in tandem. Assume that each node $i (=1, 2, \ldots, N)$ would provide an energy-oblivious service curve $S^i \sim_{sc} <g^i, \beta^i>$ to its input. Assume that each node $i$ is powered with an energy supply following $C^i \sim_{sec}<f^i_1, \alpha^i_1, f^i_2, \alpha^i_2>$. The network guarantees to the flow a stochastic service curve $S\sim_{sc} <g, \beta>$ with 
\begin{equation}
g = \dot{g}^{1}\otimes \dot{g}^{2} \ldots \otimes \dot{g}^{N}, \label{eq:theorem4_1} 
\end{equation}
\begin{equation}
\beta = \dot{\beta}^{1} \otimes \dot{\beta}^{2} \ldots \otimes \dot{\beta}^{N},\label{eq:theorem4_2} 
\end{equation}
where 
\begin{equation} 
\dot{g}^{i}(x) = g^i \otimes f^i_2(x), 
\end{equation} 
\begin{equation}
\dot{\beta}^{i} (t)= \beta^i(t) \wedge \int_{0}^t \mathcal{P}^{-1}(\alpha^i_2(x)) dx.
\end{equation}
\end{theorem}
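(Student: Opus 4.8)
The plan is to reduce the network claim to a per-node statement that is already in hand, namely Theorem~\ref{theorem2}, and then invoke the concatenation property of the $s.c.$ service model. The key observation is that the energy constraint at each node has already been folded into a single effective, energy-aware service curve by Theorem~\ref{theorem2}; once each physical node is replaced by its effective server, what remains is a standard tandem concatenation in ordinary stochastic network calculus, requiring no new probabilistic estimation.

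First I would apply Theorem~\ref{theorem2} to each node $i$ in isolation. Since node $i$ has energy-oblivious service curve $S^i\sim_{sc}<g^i,\beta^i>$ and is powered by $C^i\sim_{sec}<f_1^i,\alpha_1^i,f_2^i,\alpha_2^i>$, Theorem~\ref{theorem2} gives that the service actually delivered to the flow at node $i$ satisfies $S_e^i\sim_{sc}<\dot{g}^i,\beta^i\wedge\dot{\alpha_2}^i>$, where $\dot{g}^i(x)=g^i\otimes f_2^i(x)$ and $\dot{\alpha_2}^i(t)=\int_0^t\mathcal{P}^{-1}(\alpha_2^i(x))\,dx$. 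I would then point out that $\beta^i\wedge\dot{\alpha_2}^i$ is precisely the $\dot{\beta}^i$ defined in the statement and that the effective bounding function is precisely the stated $\dot{g}^i$. Hence each energy-constrained node is equivalent, as a provider of service, to a single $s.c.$ server with curve $\dot{\beta}^i$ and bounding function $\dot{g}^i$.

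Next I would invoke the concatenation property of the $s.c.$ model (Chapter~5 of~\cite{Jiangbook}): if a flow traverses two $s.c.$ servers in tandem with curves $\dot{\beta}^i,\dot{\beta}^{i+1}$ and bounding functions $\dot{g}^i,\dot{g}^{i+1}$, the composite is again an $s.c.$ server with curve $\dot{\beta}^i\otimes\dot{\beta}^{i+1}$ and bounding function $\dot{g}^i\otimes\dot{g}^{i+1}$. Iterating this over the $N$ effective servers by a routine induction on the number of nodes, where the (energy-constrained) output of node $i$ serves as the input of node $i+1$ and associativity of $(\min,+)$ convolution makes the $N$-fold composition well defined, yields $\beta=\dot{\beta}^1\otimes\cdots\otimes\dot{\beta}^N$ and $g=\dot{g}^1\otimes\cdots\otimes\dot{g}^N$, which is exactly the assertion.

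The main obstacle is conceptual rather than computational: one must verify that the effective per-node server produced by Theorem~\ref{theorem2} is a genuine $s.c.$ service curve for the correct input/output pair, so that the concatenation lemma applies verbatim without re-deriving it. This is exactly why the $s.c.$ model, rather than the less restrictive $w.s.$ model, was adopted here, as flagged in the remark following Definition~\ref{def-sc}: the $s.c.$ definition keeps the supremum over the whole interval $[0,t]$ inside the probability, which is the structural feature the concatenation argument relies on. Once this identification is secured, the result follows immediately, which is why the statement can be said to hold at once from Theorem~\ref{theorem2} and the concatenation property.
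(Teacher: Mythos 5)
Your proposal is correct and follows exactly the route the paper takes: the paper's own ``proof'' is the one-line observation that the result follows immediately from Theorem~\ref{theorem2} applied per node together with the concatenation property of the $s.c.$ model, and your write-up simply makes that reduction explicit. Nothing further is needed.
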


\begin{remark} Theorem~\ref{theorem4} indicates that we can treat the concatenation of multiple nodes as a single system. To obtain the end-to-end performance, we just need to apply Theorem~\ref{theorem4} into Theorem~\ref{theorem3}, that is, replacing $\dot{g}$ and $\beta\wedge \dot{\alpha_2}(t)$ in Theorem~\ref{theorem3} with $g$ and $\beta$ in Theorem~\ref{theorem4}, respectively.
\end{remark}

\begin{remark} In Sections~\ref{sec:singleNode} and~\ref{sec:network}, we only used the upper curve of the charged energy amount to constrain a service. Actually, another set of similar analysis could be done, if we use the lower curve of the charged energy amount to constrain a service. The practical meaning of this constraint is to avoid overcharging the system and to effectively use the energy, i.e., the service should deplete the energy at a rate no less than the minimal charging rate. 
\end{remark}
 
\section{Nodes Powered with Multiple Energy Sources}  \label{sec:multipleSources}

It is possibly that a network node might be charged with multiple energy sources. For example, several types of energy such
as solar, wind, vibrational, and thermal among others can be scavenged from the surroundings of a sensor node to
replenish its battery~\cite{Roundy}. In this case, we need to study the statistical features of the integrated energy source, for which we have the following theorem. 

\begin{theorem} \label{theoremMultipleEnergy}
Assume that a node has $N$ energy sources, denoted by $C^1, C^2, \ldots, C^N$, with $C^i$ following $C^i(t)\sim_{sec}<f^i_1, \alpha^i_1, f^i_2, \alpha^i_2>$. Assume that the hardware permits the multiple energy sources to charge the battery simultaneously. The overall energy supply to the node, denoted by $C$, follows $C(t)\sim_{sec}<f_1,\alpha_1, f_2, \alpha_2>$, where 
\begin{equation}
f_1 (x) = f^1_1 \bar{\otimes} f^2_1 \ldots \bar{\otimes} f^N_1 (x)-(N-1), \label{eq:f_1} 
\end{equation}
\begin{equation}
\alpha_1 (t) = \alpha^1_1 + \alpha^2_1 \ldots + \alpha^N_1 (t), \label{eq:alpha_1} 
\end{equation}
\begin{equation}
f_2 (x) = f^1_2 \otimes f^2_2 \ldots \otimes f^N_2 (x), \label{eq:f_2}
\end{equation}
\begin{equation}
\alpha_2 (t) = \alpha^1_2 + \alpha^2_2 \ldots + \alpha^N_2 (t), \label{eq:alpha_2} 
\end{equation}
\end{theorem}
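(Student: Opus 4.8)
The plan is to exploit the natural additive structure of simultaneous charging: since all $N$ sources replenish the same battery at once, the aggregate cumulative charge is $C(t) = \sum_{i=1}^N C^i(t)$, and hence $C(s,t) = \sum_{i=1}^N C^i(s,t)$ for every $0 \le s \le t$. Everything then reduces to controlling a sum of the $N$ per-source residual processes and feeding the result into Lemma~\ref{lemma1}, which already handles sums of (possibly dependent) random variables purely at the level of CCDF bounds.

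For the upper curve, I would fix $t$ and set $U^i := \sup_{0 \le s \le t}[C^i(s,t) - \alpha^i_2(t-s)]$, so that (\ref{eq:sec2}) gives $\bar{F}_{U^i}(x) \le f^i_2(x)$. Because $\alpha_2 = \sum_i \alpha^i_2$ by (\ref{eq:alpha_2}), the aggregate quantity $\sup_{0\le s\le t}[C(s,t)-\alpha_2(t-s)]$ equals $\sup_{0\le s\le t}\sum_i[C^i(s,t)-\alpha^i_2(t-s)]$, which is at most $\sum_i U^i$ since the supremum of a sum is bounded by the sum of suprema. Monotonicity of $Prob\{\cdot>x\}$ then yields $Prob\{\sup_{0\le s\le t}[C(s,t)-\alpha_2(t-s)] > x\} \le Prob\{\sum_i U^i > x\}$, and an $N$-fold induction on the upper bound (\ref{eq:lemma1-1}) of Lemma~\ref{lemma1}, using associativity of $\otimes$ and closure of $\bar{\mathcal{F}}$ under $\otimes$, gives $Prob\{\sum_i U^i > x\} \le f^1_2 \otimes \cdots \otimes f^N_2(x) = f_2(x)$. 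This establishes (\ref{eq:sec2}) for the aggregate with curve (\ref{eq:alpha_2}) and bounding function (\ref{eq:f_2}).

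For the lower curve the argument is dual but more delicate. I would set $L^i := \inf_{0\le s\le t}[C^i(s,t)-\alpha^i_1(t-s)]$, so (\ref{eq:sec1}) gives $\bar{F}_{L^i}(x) \ge f^i_1(x)$. With $\alpha_1 = \sum_i \alpha^i_1$ by (\ref{eq:alpha_1}), the infimum of a sum dominates the sum of infima, so $\inf_{0\le s\le t}[C(s,t)-\alpha_1(t-s)] \ge \sum_i L^i$ pointwise, whence $Prob\{\inf_{0\le s\le t}[C(s,t)-\alpha_1(t-s)] > x\} \ge Prob\{\sum_i L^i > x\}$. The remaining task is to lower bound $\bar{F}_{\sum_i L^i}(x)$ by $f^1_1 \bar{\otimes}\cdots\bar{\otimes} f^N_1(x) - (N-1)$ via repeated use of (\ref{eq:lemma1-2}). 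The bookkeeping of the constant is the crux: each application subtracts one more unit, and the key identity is that subtracting a constant commutes with $\bar{\otimes}$, i.e. $((h-c)\bar{\otimes} f)(x) = (h\bar{\otimes} f)(x) - c$, so after the $k$-th step the accumulated deficit is exactly $k-1$ and is not inflated by the convolution; this is what produces a single $-(N-1)$ rather than something larger.

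The step I expect to be the main obstacle is precisely this inductive lower bound, because the intermediate lower-bounding functions $f^1_1\bar{\otimes}\cdots\bar{\otimes} f^k_1 - (k-1)$ need not lie in $\bar{\mathcal{F}}$ (they may go negative), so Lemma~\ref{lemma1} cannot be invoked verbatim as a black box. I would circumvent this by noting that the lower-bound half of Lemma~\ref{lemma1} uses only pointwise lower bounds on the two CCDFs, never their membership in $\bar{\mathcal{F}}$; equivalently I would re-derive the $N$-fold bound directly from a union bound. For any split $x = \sum_i s_i$ with $s_i \ge 0$, the event $\{\sum_i L^i \le x\}$ is contained in $\bigcup_i\{L^i \le s_i\}$, so $Prob\{\sum_i L^i \le x\} \le \sum_i(1 - f^i_1(s_i))$; rearranging gives $\bar{F}_{\sum_i L^i}(x) \ge \sum_i f^i_1(s_i) - (N-1)$, and taking the supremum over all such splits turns the sum into the $N$-fold max-plus convolution $f^1_1\bar{\otimes}\cdots\bar{\otimes} f^N_1(x)$. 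This delivers (\ref{eq:sec1}) with curve (\ref{eq:alpha_1}) and bounding function (\ref{eq:f_1}), completing the proof.
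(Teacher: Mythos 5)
Your proof is correct and follows essentially the same route as the paper: decompose $C(s,t)=\sum_i C^i(s,t)$, bound the infimum (resp.\ supremum) of the sum from below (resp.\ above) by the sum of the per-source infima (resp.\ suprema), and then apply Lemma~\ref{lemma1} iteratively. The paper only writes out the $N=2$ case and says the general result follows ``by recursively applying'' it; your bookkeeping of the $-(N-1)$ constant and your observation that the lower-bound half of Lemma~\ref{lemma1} uses only pointwise CCDF bounds (or, equivalently, your direct $N$-fold union bound) supply exactly the detail the paper leaves implicit.
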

\begin{proof} We only prove Equations~(\ref{eq:f_1}) and~(\ref{eq:alpha_1}) by applying Inequality~(\ref{eq:lemma1-2}), as the proof for Equations~(\ref{eq:f_2}) and~(\ref{eq:alpha_2}) is similar by using Inequality~(\ref{eq:lemma1-1}). In addition, we only need to prove the case that $N=2$, because the theorem holds by recursively applying the result for $N=2$. 

As we assume that the multiple energy sources can charge the node simultaneously, for any time $t>0$, $C(t) = C^1(t) + C^2(t)$. For any $t \geq s >0$, we have 
\begin{align}
 & \inf_{0 \le s \leq t}\{C(s,t)-\alpha_1(t-s)\} \nonumber \\
 = & \inf_{0 \le s \leq t}\{C^1(s,t)+C^2(s,t)-(\alpha^1_1(t-s)+\alpha^2_1(t-s))\} \nonumber \\ 
  = &  \inf_{0 \le s \leq t}\{C^1(s,t)-\alpha^1_1(t-s)+(C^2(s,t)- \alpha^2_1(t-s))\} \nonumber \\
  \geq & \inf_{0 \le s \leq t}\{C^1(s,t)-\alpha^1_1(t-s)+\inf_{0 \le s \leq t}[C^2(s,t)- \alpha^2_1(t-s)]\} \nonumber \\
  =& \inf_{0 \le s \leq t}\{C^1(s,t)-\alpha^1_1(t-s)\} + \inf_{0 \le s \leq t}\{C^2(s,t)- \alpha^2_1(t-s)\} \label{eq:proofMultipleSource}
\end{align} 
The lower curve and the lower bounding function of $C(t)$, i.e., Equations~(\ref{eq:f_1}) and~(\ref{eq:alpha_1}), are thus proved based on the above inequality and Inequality~(\ref{eq:lemma1-2}).
\end{proof}
 
\section{Further Improvement on Bounds} \label{sec:improvement}
So far, we have developed an analytical framework based on stochastic network calculus to evaluate the performance of network systems with nodes powered by renewable energy sources. This analytical framework is general enough no matter whether or not the energy-oblivious service process and the energy charging process are independent, and no matter whether or not multiple energy sources, if they exists, are independent. If we assume that the energy-oblivious service process and the energy charging process are independent or that multiple energy sources are independent, which we believe is true in most applications, better performance bounds could be obtained, using the following lemma. 

\begin{lemma}\label{lemma-indep} 
Assume that the \textit{complementary cumulative distribution functions (CCDF)} of non-negative random variables $X, Y$ are $\bar{F}_{X} \in \bar{\mathcal{F}}$  and $\bar{F}_{Y} \in \bar{\mathcal{F}}$, respectively. Assume that $f_1(x) \leq \bar{F}_{X}(x) \leq f_2(x)$, $g_1(x) \leq \bar{F}_{Y}(x) \leq g_2(x)$, and $f_1, f_2, g_1, g_2 \in \bar{\mathcal{F}}$. Denote $Z=X + Y$. Assume that $X$ and $Y$ are independent, there holds for $\forall x \ge 0$,
\begin{equation} 
\bar{F}_Z(x) \leq 1- \bar{f}_2 \star \bar{g}_2 (x) , \label{eq:lemma-indep-1}
\end{equation} 
and 
\begin{equation} 
\bar{F}_Z(x) \geq 1- \bar{f}_1 \star \bar{g}_1 (x), \label{eq:lemma-indep-2}
\end{equation} 
where $\bar{f}_1= 1-[f_1]_1, \bar{f}_2= 1- [f_2]_1,\bar{g}_1= 1-[g_1]_1,  \bar{g}_2= 1-[g_2]_1$, and $\star$ is the Stieltjes convolution operation~\cite{Papoulis}. 
\end{lemma}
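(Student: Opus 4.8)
The plan is to reduce the statement to the classical fact that the distribution function of a sum of two independent non-negative random variables is the Stieltjes convolution of the two marginal distribution functions, and then to exploit the monotonicity of that convolution. Complementary distribution functions are awkward to convolve directly, so the first move is to pass to ordinary CDFs by writing $F_X = 1-\bar{F}_X$, $F_Y = 1-\bar{F}_Y$, and $F_Z = 1-\bar{F}_Z$. The roles of the four bounding functions then invert: a lower bound on a CCDF becomes an upper bound on the corresponding CDF, and vice versa.

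The first step converts the four CCDF bounds into CDF bounds, and this is exactly where the clamp $[\,\cdot\,]_1$ does its work. Since $\bar{F}_X$ is a probability it lies in $[0,1]$, so $f_1(x)\le \bar{F}_X(x)\le f_2(x)$ together with $0\le\bar{F}_X(x)\le 1$ yields the genuine probability bounds $[f_1]_1(x)\le \bar{F}_X(x)\le [f_2]_1(x)$ (clamping is what turns an upper bound $f_2$ that may exceed $1$ into a usable one). Subtracting from $1$ and recalling $\bar{f}_i = 1-[f_i]_1$ gives precisely $\bar{f}_2(x)\le F_X(x)\le \bar{f}_1(x)$, and likewise $\bar{g}_2(x)\le F_Y(x)\le \bar{g}_1(x)$. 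I would also record here that each of $\bar{f}_1,\bar{f}_2,\bar{g}_1,\bar{g}_2$ is $[0,1]$-valued and wide-sense increasing (being $1$ minus a member of $\bar{\mathcal{F}}$), so each induces a non-negative Stieltjes measure on $[0,\infty)$, which is what makes the convolutions in \eqref{eq:lemma-indep-1} and \eqref{eq:lemma-indep-2} well defined.

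The second step invokes independence. Because $X,Y\ge 0$ are independent, the Papoulis convolution formula gives $F_Z = F_X\star F_Y$, i.e. $F_Z(z)=\int_0^z F_X(z-u)\,dF_Y(u)$. I then establish monotonicity of $\star$ in both arguments in two stages. Using $F_X\ge\bar{f}_2$ pointwise against the non-negative measure $dF_Y$ gives $F_Z(z)\ge \int_0^z \bar{f}_2(z-u)\,dF_Y(u)=(\bar{f}_2\star F_Y)(z)$; commuting the convolution and using $F_Y\ge\bar{g}_2$ against the non-negative measure $d\bar{f}_2$ gives $(\bar{f}_2\star F_Y)(z)\ge(\bar{f}_2\star\bar{g}_2)(z)$. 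Hence $F_Z\ge\bar{f}_2\star\bar{g}_2$, and taking complements yields \eqref{eq:lemma-indep-1}. Running the identical argument with the upper CDF bounds $F_X\le\bar{f}_1$ and $F_Y\le\bar{g}_1$ gives $F_Z\le\bar{f}_1\star\bar{g}_1$, whose complement is \eqref{eq:lemma-indep-2}.

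The main obstacle is making the monotonicity and, especially, the commutativity of the Stieltjes convolution rigorous when the comparison functions $\bar{f}_i,\bar{g}_i$ are only sub-distributions (they need not increase to $1$) and may carry atoms, in particular at the origin where $\bar{f}_i(0)$ or $\bar{g}_i(0)$ can be strictly positive. The commutation identity $\int \bar{f}_2(z-u)\,dF_Y(u)=\int F_Y(z-u)\,d\bar{f}_2(u)$ is the convolution-of-measures symmetry and must be justified with the boundary mass at $0$ accounted for correctly, for instance by interpreting every integral over the closed interval $[0,z]$ and appealing to Fubini for the product measure $d\bar{f}_2\otimes dF_Y$, rather than being assumed. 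Once that identity and the pointwise CDF comparisons are in hand, both inequalities fall out directly and the remainder is bookkeeping.
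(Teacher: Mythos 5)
Your proposal is correct and follows essentially the same route as the paper's proof: pass to CDFs (where the clamp $[\cdot]_1$ turns the CCDF bounds into $\bar{f}_2\le F_X\le\bar{f}_1$ and $\bar{g}_2\le F_Y\le\bar{g}_1$), apply the convolution formula $F_Z=F_X\star F_Y$, substitute the bounds one at a time using commutativity of the Stieltjes convolution, and take complements. The only differences are that the paper proves just \eqref{eq:lemma-indep-2} this way and cites Papoulis for \eqref{eq:lemma-indep-1}, which you instead derive by the symmetric argument, and that you are somewhat more explicit about the monotonicity/measure-theoretic justifications.
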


With Lemma~\ref{lemma-indep}, if the independence assumption could be made, the bounds in the theorems of this paper could be revised accordingly. That is, we could apply Lemma~\ref{lemma-indep} instead of Lemma~\ref{lemma1} in the proofs of the theorems to get better bounds. For example, if we assume that two energy sources are independent, the lower and upper bounds in Theorem~\ref{theoremMultipleEnergy} should be revised, respectively, to 
\begin{equation}
f_1 (x) = 1- (1-f^1_1) \star (1-f^2_1) , \label{eq:ind-f_1} 
\end{equation}
\begin{equation}
f_2 (x) = 1- (1-f^1_2) \star (1-f^2_2). \label{eq:ind-f_2}
\end{equation}
As an example, assume that $f^1_1(x) =f^2_1(x) = e^{-2x}$ and $f^1_2(x) =f^2_2(x) = e^{-x}$. With Theorem~\ref{theoremMultipleEnergy}, we have the lower bound $f_1(x) = e^{-2x} $ and the upper bound $f_2(x) = 2e^{-x/2}$. But, by applying  Lemma~\ref{lemma-indep}, we can get new lower bound  $f_1(x) =(1+2x)e^{-2x} $ and new upper bound $f_2(x) = (1+x) e^{-x}$. It is easy to verify that the new bounds are tighter as shown in Fig.~\ref{fig:tighterBound}. 

\begin{figure}
\begin{center}
  \includegraphics[width = 0.7\columnwidth]{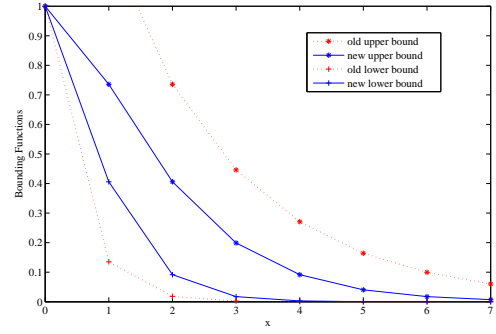}\\
  \caption{Comparison of old and new bounds}\label{fig:tighterBound}
  \vspace{-0.25in}
\end{center}
\end{figure}

\section{Related Work}
\label{sec:relatedwork}

With the increasing importance of greening computing, QoS and performance evaluation of network systems with uncertain energy supply have attracted much attention. Many performance modeling approaches have been developed, which could be divided into two main categories, deterministic and stochastic approaches.  

In the deterministic group, Zafer~\cite{Zafer09} \emph{et al.} use deterministic network calculus to model traffic arrival and traffic departure. They use a power-rate function to link the traffic departure rate and energy consumption rate. By considering the special features of specific power-rate functions, they formulate and solve the optimal transmission scheduling problem under the given energy constraints. Their work only focuses on single-node analysis and assumes that traffic arrivals and service rates are deterministic. Kansal \emph{et al.}~\cite{kansal2007power} propose a so-termed \textit{harvesting theory} to help energy management of a sensor node and determine the performance levels that the sensor node can support. The basic idea of the harvesting theory is to use a leaky bucket model to represent energy supply and energy depletion. Moser \emph{et al.}~\cite{moser2007real} describe energy-aware scheduling and prove the conditions for a scheduling algorithm to be optimal in a system whose energy storage is replenished \textit{predictably}. 

In the stochastic group, Markov chain models have been used extensively. Susu \emph{et al.}~\cite{susu2008stochastic} use a discrete-time Markov chain in which states represent different energy levels. Some work~\cite{niyato2007sleep} uses a Markov chain model to capture the influence of clouds and wind on solar radiation intensity. Relevant to stochastic energy modeling, there are many efforts to predict a stochastic energy supply. Lu~\emph{et al.}~\cite{lu2010accurate} assess three prediction techniques: regression analysis, moving average, and exponential smoothing. Recas~\emph{et al.}~\cite{recas2000hollows} propose a weather-conditioned moving average (\textsc{wcma}) model, which adapts to long-term seasonal changes and short-term sudden weather changes. Moser \emph{et al.}~\cite{moser2007real} introduce energy variability curves to predict the power provided by a harvesting unit. 

We develop our analytical framework based on stochastic network calculus~\cite{Ciucu06,Fidler,Jia,Jiangbook}. Unlike deterministic network calculus~\cite{Chang00,Le}, which searches for the worst-case performance bounds, stochastic network calculus tries to derive \textit{tighter} performance bounds, but with a small probability that the bounds may not hold true. Since most renewable energy sources, such as solar and wind energy, are not deterministic, stochastic network calculus is a good fit for the performance evaluation of systems using renewable energy. Nevertheless, traditional stochastic network calculus was not originally targeted at modeling such systems. Substantial work is thus required to extend this useful theory.

Recent interesting work by Wang \emph{et al.}~\cite{Wang11} uses stochastic network calculus to evaluate the reliability of the power grid with respect to renewable energy. Their energy supply and demand models are a subset of the models we present in Section~\ref{sec:energyModel}. Their work shows a good example of how to tailor our models for a specific application. Another \textit{fundamental} difference is that Wang \emph{et al.} define the energy supply and energy demand as two de-coupled random processes, while in our work energy discharging is inherently coupled with energy charging.       

Finally, related to analytical frameworks for performance modeling, there is a large body of research on energy-aware scheduling algorithms. For example, Niyato \emph{et al.}~\cite{niyato2007sleep} investigate the impact of different sleep and wake-up strategies on data communication among solar-powered wireless nodes. In~\cite{vigorito2007adaptive}, Vigorito \emph{et al.} propose an adaptive duty-cycling algorithm that ensures operational power levels at wireless sensor nodes regardless of changing environmental conditions. In~\cite{Gorlatova11}, Gorlatova \emph{et al.} measure the energy availability in indoor environment and based on the measurement results they  develop algorithms to determine energy allocation in systems with predictable energy inputs and in systems where energy inputs are stochastic. In the stochastic model, they assume that energy inputs are i.i.d. random variables. Unlike the above work, our analytical framework is generic and uses only abstract notations on energy charging/discharging amount, traffic arrival amount, etc. As such, all the above work could be treated as a special case of our more general framework in which the abstract functions are replaced with concrete ones for specific applications. 

\section{Conclusion and Future Work} \label{sec:conclusion}

The application of renewable energy poses many challenging questions when it comes to a QoS guarantee for IT infrastructure. Particularly, the strong call for a generic analytical tool for the performance modeling and evaluation of such systems remains unanswered. We, in this paper, develop such a theory using a stochastic network calculus~\cite{Ciucu06,Fidler,Jia,Jiangbook} approach. We enrich the existing stochastic network calculus theory to make it useful for evaluating systems with a stochastic energy supply. We introduce new models to capture the dynamics in the energy charging and discharging processes, and derive new analytical results to provide fundamental performance bounds, which could be used to guide the energy management and the design of task scheduling algorithms. 

This paper mainly focuses on the introduction of a theoretical framework. Along the line, we envisage many future research topics, including (1) the application of energy charging/discharging models in different applications (e.g.,~\cite{susu2008stochastic,Wang11}), (2) analyzing the service model of different scheduling strategies (e.g.,~\cite{moser2007real}), and (3) network reliability analysis (e.g., network-wide energy outage analysis).    

\section*{Acknowledgment}
Kui Wu and Dimitri Marinakis were supported by the Natural Sciences and Engineering Research Council of Canada (NSERC). Yuming Jiang was supported by the Centre for Quantifiable Quality of Service (Q2S) in Communication Systems, Centre of Excellence, which is appointed by the Research Council of Norway, and funded by the Research Council, NTNU and UNINETT. 

\bibliographystyle{abbrv}
\bibliography{reference}
\section*{Appendix}
\balance 
\subsection{Proof of Lemma~\ref{lemma1}}
\begin{proof} The proof of (\ref{eq:lemma1-1}) could be found at~\cite{Jia}. We only prove (\ref{eq:lemma1-2}). For any $y \geq x \geq 0$, 
\begin{equation} \{X + Y \le y\} \cap \{X > x\} \cap \{Y > y-x\} = \emptyset,\end{equation} 
where $\emptyset$ denotes the null set. We thus have 
\begin{equation} \{X+ Y > y\} \supset (\{X > x\} \cap \{Y > y-x\})\end{equation} 
and hence 
\begin{align}
& Prob\{X+Y > y\} \geq Prob\{\{X > x\} \cap \{Y > y-x\}\} \nonumber \\
\geq & 1- Prob\{X \le x\} - Prob\{Y \le y-x\}\nonumber \\
 = & Prob\{X > x\} + Prob\{Y > y-x\} -1 . \nonumber \\
\end{align}
Since the above inequality holds for all $y \geq x \geq 0$, we get 
\begin{align}
Prob\{X+Y > y\} &\geq \sup_{y \geq x \geq 0} \{Prob\{X > x\} \nonumber  \\
&+ Prob\{Y > y-x\}\} -1,
\end{align}
which is 
\begin{align}
\bar{F}_Z(y) \geq \bar{F}_{X} \bar{\otimes} \bar{F}_{Y}(y) -1 \geq  f_1 \bar{\otimes} g_1(y) -1. 
\end{align}
\end{proof}

\subsection{Proof of Lemma~\ref{lemma-2-0}}
\begin{proof}
By definitions of the $\otimes$ and $\wedge$ operations, we need to prove that 
\begin{equation}
\inf_{0\le s \le t} \{f_1(s) + f_2\wedge f_3 (t-s)\} \le \inf_{0\le s \le t} \{f_1(s) + f_2(t-s)\} \wedge f_3(t). \label{eq:lemma-2-1}
\end{equation}

Since $f_3 \in \mathcal{F}$, we have 
\begin{equation}
\inf_{0\le s \le t} \{f_1(s) + f_2\wedge f_3 (t-s)\} \le \inf_{0\le s \le t} \{f_1(s) + f_2 (t-s) \wedge f_3(t)\}.
\end{equation}
We therefore only need to prove that 
\begin{equation}
\inf_{0\le s \le t} \{f_1(s) + f_2 (t-s) \wedge f_3(t)\} \le \inf_{0\le s \le t} \{f_1(s) + f_2(t-s)\} \wedge f_3(t). \label{eq:lemma-2-2}
\end{equation}
First, for any given $t\ge 0$, there exists a $s_1 (0 \le s_1 \le t)$, such that 
\begin{equation}
\inf_{0\le s \le t} \{f_1(s) + f_2(t-s)\} = f_1(s_1) + f_2(t-s_1). \label{eq:lemma2-overall} 
\end{equation}
In addition, when $t$ is given, $f_3(t)$ is a constant. 

We next prove that Inequality~(\ref{eq:lemma-2-2}) holds true for the following two exclusive cases:
\begin{itemize}
\item Case 1: $f_1(s_1) + f_2(t-s_1) \le f_3(t).$ The right hand side of Inequality~(\ref{eq:lemma-2-2}) equals $f_1(s_1) + f_2(t-s_1)$. Since $f_1 \in \mathcal{F}$ and $f_1(0) =0$, we have $f_1(s_1) \ge 0$. Therefore,
\begin{equation}
f_2(t-s_1) \le f_3(t). \label{eq:lemma2-case1}
\end{equation}
We need to show that the left hand side of Inequality~(\ref{eq:lemma-2-2}) is no larger than $f_1(s_1) + f_2(t-s_1)$. For the given $t\ge 0$, there exists a $s_2 (0 \le s_2 \le t)$ such that  
\begin{equation}
\inf_{0\le s \le t} \{f_1(s) + f_2 (t-s) \wedge f_3 (t)\} = f_1(s_2) + f_2 (t-s_2) \wedge f_3 (t).\label{eq:lemma2-case1-2}
\end{equation} 
We have  
\begin{align}
f_1(s_2) + f_2 (t-s_2) \wedge f_3 (t) & \le f_1(s_1) + f_2 (t-s_1) \wedge f_3 (t) \nonumber \\
 = & f_1(s_1) + f_2 (t-s_1). \nonumber \\
\end{align}
In the above, the first inequality is due to~(\ref{eq:lemma2-case1-2}), and the second equality is due to~(\ref{eq:lemma2-case1}).
\item Case 2: $f_1(s_1) + f_2(t-s_1) > f_3(t).$ The right hand side of Inequality~(\ref{eq:lemma-2-2}) equals $f_3(t)$. We also have 
$f_1(0) + f_2 (t-0) \ge f_1(s_1) + f_2 (t-s_1)$ due to~(\ref{eq:lemma2-overall}). Therefore, 
$$f_2(t) \ge f_1(s_1) + f_2 (t-s_1) > f_3(t),$$ and 
\begin{align}
\inf_{0\le s \le t}\{f_1(s) + f_2 (t-s) \wedge f_3 (t) \} & \le f_1(0) + f_2 (t) \wedge f_3 (t) \nonumber \\
 &=  f_3(t)\nonumber \\
\end{align}
\end{itemize}
In conclusion, Inequality~(\ref{eq:lemma-2-2}) is true for both cases. The lemma is proved. 
\end{proof}

\subsection{Proof of Lemma~\ref{lemma-2}}
\begin{proof} 
To evaluate the value of $X_1 \wedge X_4 - X_2 \wedge X_3$, we have: 
\begin{itemize}
\item Case 1: $X_1 \leq X_4$ and $X_2\geq X_3$. We have $X_1=X_2=X_3=X_4$, because $X_1\geq X_2\geq 0$ and $X_3\geq X_4 \geq 0$. Thus (\ref{eq:lemma-2}) holds.
\item Case 2: $X_1 \leq X_4$ and $X_2 \leq X_3$. We have  $$X_1 \wedge X_4 - X_2 \wedge X_3 = X_1 - X_2 \leq X_1-X_2 + X_3-X_4.$$
\item Case 3: $X_1 \geq X_4$ and $X_2\geq X_3$. We have 
$$X_1 \wedge X_4 - X_2 \wedge X_3 = X_4 - X_3 \leq 0.$$ 
Inequality (\ref{eq:lemma-2}) holds since the right side is no less than $0$. 
\item Case 4: $X_1 \geq X_4$ and $X_2\leq X_3$. We have
$$X_1 \wedge X_4 - X_2 \wedge X_3 \leq X_1 - X_2 \leq X_1-X_2 + X_3-X_4. $$ 
\end{itemize}
The lemma is proved since the above list covers all possible scenarios. 
\end{proof}

\subsection{Proof of Lemma~\ref{lemma-indep}}

\begin{proof} The proof of (\ref{eq:lemma-indep-1}) could be found at~\cite{Papoulis}. We only prove (\ref{eq:lemma-indep-2}). 

For independent non-negative random variables, $X$ and $Y$, we have $\forall x \ge 0$, 
$$F_Z(x) = \int_0^x F_{X}(x-y) dF_Y(y).$$
Note that $F_X, F_Y, \bar{f}_1, \bar{g}_1$ are wide-sense increasing, $F_X \leq \bar{f}_1$ and $F_Y \leq \bar{g}_1$. Hence, we have 
\begin{align}
F_Z(x) & = \int_0^x F_{X}(x-y) dF_Y(y) \leq \int_0^x \bar{f}_1(x-y) dF_Y(y)\nonumber \\
 = & \int_0^x F_Y(x-y) d\bar{f}_1(y)  \nonumber \\
 \leq & \int_0^x \bar{g}_1(x-y) d\bar{f}_1(y) = \bar{f}_1 \star \bar{g}_1 (x).
\end{align}
Note that the third equality holds because Stieltjes convolution operation is commutative. Inequality (\ref{eq:lemma-indep-2}) is thus proved since $\bar{F}_Z(x) = 1- F_Z(x)$. 
\end{proof}

\end{document}